\definecolor{Darkblue}{rgb}{0,0,0.4}
\definecolor{Brown}{cmyk}{0,0.81,1.,0.60}
\definecolor{Purple}{cmyk}{0.45,0.86,0,0}
\newcommand{\lref}[2][]{\hyperref[#2]{#1~\ref*{#2}}}
\newenvironment{proof}{{\bf Proof:  }}{\hfill\rule{2mm}{2mm}}
\newenvironment{proofof}[1]{{\bf Proof of #1:  }}{\hfill\rule{2mm}{2mm}}
\numberwithin{figure}{section}
\numberwithin{equation}{section}
\newtheorem{definition}{Definition}[section]
\newtheorem{theorem}{Theorem}[section]
\newtheorem{lemma}{Lemma}[section]
\newtheorem{fact}{Fact}[section]
\newtheorem{observation}{Observation}[section]
\providecommand{\Appendix}{}
\renewcommand{\Appendix}[2][?]{%
        \refstepcounter{section}%
        \vspace{\parskip}%
        {\flushright\large\bfseries\appendixname\ \thesection: #1}%
        \vspace{\baselineskip}%
}
\renewcommand{\appendix}{%
        \newpage
        \renewcommand{\section}{\secdef\Appendix\Appendix}%
        \renewcommand{\thesection}{\Alph{section}}%
        \setcounter{section}{0}%
}
\newcommand{\Z}{\mathbb{Z}}
\newcommand{\txts}{\textstyle}
\newcommand{\LPn}{\ensuremath{\mathsf{LP}_n}}
\newcommand{\LPi}{\ensuremath{\mathsf{LP}_\infty}}
\newcommand{\LDi}{\ensuremath{\mathsf{LD}_\infty}}
\newcommand{\CP}{\ensuremath{\mathsf{CP}}}
\newcommand{\CD}{\ensuremath{\mathsf{CD}}}
\newcommand{\mrg}{\textsf{MRG}\xspace}
\newcommand{\ranking}{\textsf{Ranking}\xspace}
\newcommand{\gmg}{\textsf{Greedy Matching Game}\xspace}
\newcommand{\obm}{\textsf{Online Bipartite Matching}\xspace}
\newcommand{\U}{\ensuremath{\mathcal{U}}}
\newcounter{note}[section]
\newcommand{\initOneLiners}{%
    \setlength{\itemsep}{0pt}
    \setlength{\parsep }{0pt}
    \setlength{\topsep }{0pt}
}
\newcommand{\ignore}[1]{}
\newcommand*\samethanks[1][\value{footnote}]{\footnotemark[#1]}
\newcommand{\shortv}[1]{}
\title{Ranking on Arbitrary Graphs: Rematch via Continuous LP with Monotone and Boundary Condition Constraints}
\author{T-H. Hubert Chan\thanks{Department of Computer Science, the University of Hong Kong. {\texttt{\{hubert,fchen,xwwu,zczhao\}@cs.hku.hk}}} \and Fei Chen\samethanks \and Xiaowei Wu\samethanks \and Zhichao Zhao\samethanks}
\date{}
\begin{document}

\begin{titlepage}

\maketitle

\begin{abstract}
Motivated by online advertisement and exchange settings, greedy randomized algorithms for the maximum matching problem have been studied, in which the algorithm makes (random) decisions that are essentially oblivious to the input graph.  Any greedy algorithm can achieve performance ratio 0.5, which is the expected number of matched nodes to the number of nodes in a maximum matching.

Since Aronson, Dyer, Frieze and Suen proved that the Modified Randomized Greedy (MRG) algorithm achieves performance ratio $0.5 + \epsilon$ (where $\epsilon = \frac{1}{400000}$) on arbitrary graphs in the mid-nineties, no further attempts in the literature have been made to improve this theoretical ratio for arbitrary graphs until two papers were published in FOCS 2012.
Poloczek and Szegedy also analyzed the MRG algorithm to give ratio 0.5039, while Goel and Tripathi used experimental techniques to analyze the Ranking algorithm to give ratio 0.56.  However, we could not reproduce the experimental results of Goel and Tripathi.

In this paper, we revisit the Ranking algorithm using the LP framework. Special care is given to analyze the structural properties of the Ranking algorithm in order to derive the LP constraints, of which one known as
the \emph{boundary} constraint requires totally new analysis and is crucial to the success of our LP.

We use continuous LP relaxation to analyze the limiting behavior as the finite LP grows. Of particular interest are new duality and complementary slackness characterizations that can handle the monotone and the boundary constraints in continuous LP.  We believe our work achieves the currently best theoretical performance ratio of $\frac{2(5-\sqrt{7})}{9} \approx 0.523$ on arbitrary graphs. Moreover, experiments suggest that Ranking cannot perform better than $0.724$ in general.
\end{abstract}

\thispagestyle{empty}
\end{titlepage}

\section{Introduction} \label{sec:intro}

Maximum matching~\cite{MicaliV80} in undirected graphs
is a classical problem in computer science.
However, as motivated by online advertising~\cite{Goel2008, Aggarwal2011} and exchange settings~\cite{Roth04},
information about the graphs can be incomplete or unknown.
Different online or greedy versions of the problem~\cite{Aronson1995,Matthias12,Goel12} can be formulated
by the following game, in which the algorithm is
essentially oblivious to the input graph.

\noindent \gmg.
An \emph{adversary} commits to a graph $G(V,E)$ and
reveals the nodes $V$ (where $n = |V|$) to the (possibly randomized) \emph{algorithm}, while keeping
the edges $E$ secret.  The algorithm returns a list $L$ that gives
a permutation of the set ${V \choose 2}$ of unordered pairs of nodes.
Each pair of nodes in $G$ is probed according to the order specified
by $L$ to form a matching greedily.
In the round when
a pair $e = \{u,v\}$ is probed,
if both nodes are currently \emph{unmatched} and
the edge $e$ is in $E$, then the two nodes will be \emph{matched}
to each other; otherwise, we skip to the next pair in $L$ until all pairs in $L$ are probed.
The goal 
is to maximize the \emph{performance ratio} of the (expected) number of nodes matched by the algorithm to the number of nodes
in a maximum matching in $G$.

Observe that any ordering of the pairs ${V \choose 2}$ will result in
a maximal matching in $G(V,E)$, giving a trivial performance ratio at least $0.5$.  However, for any deterministic algorithm,
the adversary can choose
a graph such that ratio $0.5$ is attained.  
The interesting question is: how much
better can randomized algorithms perform on arbitrary graphs?
(For bipartite graphs, there are theoretical analysis of randomized algorithms~\cite{Karande2011,Mahdian2011} achieving 
ratios better than 0.5.)

The \ranking algorithm (an early version appears in \cite{Karp90})
 is simple to describe: a permutation $\sigma$
on $V$ is selected uniformly at random, and naturally induces a lexicographical
order on the unordered pairs in ${V \choose 2}$ used for probing.
Although by experiments, the \ranking algorithm and other randomized algorithms seem to achieve performance ratios much larger than $0.5$, until very recently,
the best theoretical performance ratio $0.5 + \epsilon$ (where $\epsilon = \frac{1}{400000}$) 
on arbitrary graphs
was proved in the mid-nineties by Aronson et al.~\cite{Aronson1995}, who analyzed
the \textsf{Modified Randomized Greedy} algorithm (\mrg), which can be viewed as
a modified version of the \ranking algorithm.

After more than a decade of research, two papers were published in FOCS 2012 that
attempted to give theoretical ratios significantly better than the $0.5 + \epsilon$ bound.
Poloczek and Szegedy~\cite{Matthias12} also analyzed the \mrg algorithm 
to give ratio $0.5+\frac{1}{256} \approx 0.5039$, and 
Goel and Tripathi~\cite{Goel12} analyzed the \ranking algorithm
to give ratio $0.56$; however, we could not reproduce the experimental
results in~\cite{Goel12}.  Both papers used a common framework which has 
been successful for analyzing bipartite graphs: (i) utilize
the structural properties of the matching problem to form a minimization
linear program that gives a lower bound on the performance ratio; (ii) analyze
the LP theoretically and/or experimentally to give a lower bound.

\ignore{
We retrace the footsteps of previous researchers and discover that the case
for arbitrary graphs is harder than it seems. For step (i), upon close scrutiny,
we could not complete the arguments presented in both papers~\cite{Matthias12,Goel12}
that are needed to produce the constraints in the corresponding LPs; for step (ii),
we could not reproduce the experimental and theoretical analysis given in \cite{Goel12} that
produces the ratio 0.56.
}

In this paper, we revisit the \ranking algorithm using
the same framework:  (i) we use novel techniques to carefully
analyze the structural properties of \ranking
for producing new LP constraints; (ii) moreover,
we develop new primal-dual techniques for continuous LP to analyze 
the limiting behavior as the finite LP grows. Of particular interest
are new duality and complementary slackness results that can handle
monotone constraints and boundary conditions in continuous LP.
We believe that this paper achieves the currently best theoretical 
performance ratio of  $\frac{2(5-\sqrt{7})}{9} \approx 0.523$ on arbitrary graphs.  As a side note, our experiments
suggest that \ranking cannot perform better than $0.724$ in general.

\subsection{Our Contribution and Techniques}  


\begin{theorem} \label{th:ratio}
For the \gmg on arbitrary graphs, the \ranking algorithm achieves performance ratio
at least $\frac{2(5-\sqrt{7})}{9} \approx 0.523$.
\end{theorem}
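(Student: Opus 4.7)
The plan is to lower-bound the performance ratio of \ranking by constructing a minimization LP $\LPn$ whose optimum is a valid bound, and then analyzing its continuous limit. Fix an adversarial graph $G(V,E)$ together with a maximum matching $M^*$, and consider the random permutation $\sigma$. For each node $v \in V(M^*)$ and each rank $t \in [n]$, I would introduce a variable $x_{v,t}$ capturing the probability that $v$ is matched by \ranking conditioned on occupying rank $t$, or an aggregate version summing over $M^*$-endpoints. The objective $\sum_t x_{v,t}$, normalized by $|M^*|$, then equals the performance ratio. The task is to derive enough LP constraints from the structural behavior of \ranking that the LP minimum is a provable bound, and then to push the bound up to $\frac{2(5-\sqrt{7})}{9}$.

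First I would establish the ``standard'' constraints by swap/exchange arguments on $\sigma$: for any $\{u,v\}\in M^*$, at least one of $u,v$ is matched by \ranking (giving an easy $0.5$ bound), plus monotonicity-type inequalities showing that moving a node to an earlier rank only helps its match probability. Next, and crucially, I would derive the \emph{boundary} constraint, which controls the $x$-variables near the endpoints of the rank interval; this requires a fresh combinatorial analysis of \ranking that does not appear in the bipartite theory and is the first serious obstacle. The boundary constraint is what distinguishes this LP from those of prior work and is essential for beating $0.5 + \epsilon$.

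I would then pass to the continuous limit: the variables become a function $x : [0,1] \to \R$, the sums become integrals, and the finite program $\LPn$ becomes an infinite-dimensional program $\LPi$ with monotone and boundary constraints. The main technical obstacle, and the core novelty, is developing a duality theory for $\LPi$: classical Lagrangian duality does not directly handle the monotonicity ``for all $s \le t$'' constraints or the boundary conditions as ordinary pointwise inequalities. I would set up a continuous dual $\LDi$ in which the monotone constraints are dualized by nonnegative measures (or nondecreasing multiplier functions) and the boundary condition by a scalar multiplier, and then prove weak duality by a carefully chosen integration-by-parts identity, together with a complementary slackness characterization that pins down when a primal-dual pair is jointly optimal.

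With this machinery in place, the final step is to exhibit an explicit guess for the optimal primal $x^*(\cdot)$ (informed by the asymptotic behavior observed experimentally), together with an explicit dual certificate, and verify the slackness conditions to conclude that the common optimal value equals $\frac{2(5-\sqrt{7})}{9}$. Chaining everything together, $\LPn \ge \LPi = \frac{2(5-\sqrt{7})}{9}$ lower-bounds the performance ratio on every graph. I expect the derivation of the boundary constraint and the continuous-LP duality/complementary-slackness framework to be the main obstacles; once these are in place, the verification of the explicit primal-dual pair should reduce to an elementary, if tedious, calculus exercise that produces the constant $\frac{2(5-\sqrt{7})}{9}$.
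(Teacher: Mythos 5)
Your high-level plan tracks the paper's proof very closely: reduce to a minimization LP on rank-indexed match probabilities, derive a novel boundary constraint, relax to a continuous LP $\LPi$, develop weak duality and complementary slackness for continuous LPs with monotone and boundary constraints, and certify the value $\frac{2(5-\sqrt{7})}{9}$ by an explicit primal-dual pair. (One small simplification the paper makes that you should adopt: by a node-removal/alternating-path argument one may assume $G$ has a \emph{perfect} matching, so the objective is simply $\frac{1}{n}\sum_t x_t$ with the aggregate variables $x_t = \Pr[\text{node at rank }t\text{ is matched}]$, rather than per-node variables.)

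However, there is a real gap in the constraint inventory. You list (a) the trivial ``one endpoint of each $M^*$-edge is matched'' observation, (b) monotone constraints $x_{t-1}\ge x_t$, and (c) the boundary constraint. That is not enough. A short computation shows that with only the monotone and boundary constraints (plus $x_1=1$, $x_t\ge 0$), the continuous LP has optimum $\tfrac{2}{5}=0.4$: take $z\equiv c$ on $(0,1]$ and let $c+\tfrac{3}{2}c=1$. So the LP you describe cannot even certify $0.5$, let alone $0.523$. The missing ingredient is the family of \emph{evolving} constraints
$\bigl(1-\tfrac{t-1}{n}\bigr)x_t + \tfrac{2}{n}\sum_{i=1}^{t-1}x_i \ge 1$, $t\in[2..n]$,
whose continuous analogue $(1-\theta)z(\theta)+2\int_0^\theta z(\lambda)\,d\lambda \ge 1$ is what forces $z(\theta)=1-\theta$ on an initial segment $[0,\mu]$ of the optimal solution. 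These are derived by a $1$-to-$(n-i+1)$ mapping from the marginally bad events $S_i$ into good events, using the marginal position of a node in its permutation; they are the quantitative counterpart of the $1-\tfrac{1}{e}$-style analysis and they interact with the boundary constraint to pin down $\mu$ via $3\mu^2-10\mu+6=0$. Without deriving (and including) this third family, the downstream duality machinery, however correct, will certify the wrong (too small) value. Everything else in your outline -- the per-case combinatorial argument for the boundary constraint's injectivity, the integration-by-parts weak duality, the explicit piecewise primal and dual -- is the right shape, but the LP you feed into it must contain the evolving constraints.
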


Following previous work on the analysis of \ranking~\cite{Karp90},
we consider a set $\U$ of \emph{instances,} each of which has the
form $(\sigma, u)$, where $\sigma$ is a permutation on $V$ and $u$ is a node in $V$.
An instance $(\sigma, u)$ is \emph{good} if the node $u$ is matched
when \ranking is run with $\sigma$, and \emph{bad} otherwise; an event
is a subset of instances.
As argued in~\cite{Matthias12,Goel12}, one can assume that $G$
contains a perfect matching when analyzing the ratio of \ranking.  Hence,
the performance ratio of \ranking is the fraction of good instances.

\noindent \textbf{(1) Relating Bad and Good Events to Form LP Constraints.}
A simple combinatorial argument~\cite{Karp90} is often used to relate bad and good instances.
For example, if each bad instance relates to
to at least two good instances, and
each good instance is related to at most one bad instance, then
the fraction of good instances would be at least $\frac{2}{3}$.  
By considering the structural properties of \ranking, one can define
various relations between different bad and good events,
and hence can generate various constraints in an LP, whose optimal value
gives a lower bound on the performance ratio.

Despite the simplicity
of this combinatorial argument, the analysis of these relations can be elusive
for arbitrary graphs.
Hence,
we define and analyze our relations carefully to 
derive three type of constraints: \emph{monotone} constraints, \emph{evolving} constraints,
and a \emph{boundary} constraint, the last of which involves a novel construction of a
sophisticated relation, and is crucial to the success of our $\LPn$.

\noindent \textbf{(2) Developing New Primal-Dual Techniques for Continuous LP.}
As in previous works, the optimal value of $\LPn$ decreases as $n$ increases.
Hence, to obtain a theoretical proof, 
one needs to analyze the
asymptotic behavior of $\LPn$.
It could be tedious to find the optimal solution of $\LPn$ and investigate its 
limiting behavior.  One could also use experiments (for example using
strongly factor-revealing LP~\cite{Mahdian2011}) to give a proof.  We instead observe that
the $\LPn$ has a continuous $\LPi$ relaxation (in which normal variables becomes a function variable).
However, the monotone constraints in $\LPn$ require that the function
in $\LPi$ be monotonically decreasing.
Moreover, the boundary constraint has
its counterpart in $\LPi$.
To the best of our
knowledge, such continuous LPs have not been analyzed in the literature.  

We describe our formal notation in Section~\ref{sec:pre}.  In Section~\ref{sec:bad-good},
we relate bad and good events in order to form $\LPn$.
In Section~\ref{sec:lowerbound}, we 
prove a lower bound on the performance
ratio; in particular, we
develop new
primal-dual and complementary slackness characterization of a general class of continuous LP, and 
solve the continuous $\LPi$ relaxation (and its dual).
In Appendix~\ref{sec:hardness}, we describe
a hard instance and our experiments show that \ranking performs no better than $0.724$.
\ignore{
We also describe in detail the technical issues concerning previous works~\cite{Matthias12,Goel12}
in Appendix~\ref{appendix:issues}.}

\subsection{Related Work}

We describe and compare the most relevant related work.  Please refer to
the references in~\cite{Matthias12,Goel12} for a more comprehensive background
of the problem.
We describe \gmg general enough so that we can compare 
different works that are studied under
different names and settings. 
Dyer and Frieze~\cite{DyerF91} showed
that picking a permutation of unordered pairs uniformly at random
cannot produce a constant ratio that is  strictly greater than $0.5$.
On the other hand, this framework also includes the \mrg algorithm, which
was analyzed by  by Aronson et al.~\cite{Aronson1995}
to prove the first non-trivial constant performance ratio crossing the 0.5 barrier.
One can also
consider \emph{adaptive} algorithms in which the algorithm
is allowed to change the order in the remaining list
after seeing the probing results; although hardness
results have been proved for adaptive algorithms~\cite{Goel12}, no algorithm
in the literature seems to utilize this feature yet.



\noindent \textbf{On Bipartite Graphs.}
Running \ranking on bipartite graphs for
the \gmg is equivalent to running ranking~\cite{Karp90} 
for the \obm problem with \emph{random arrival order}~\cite{Karande2011}.
From Karande, Mehta and Tripathi~\cite{Karande2011}, one can conclude that
\ranking achieves ratio $0.653$ on bipartite graphs. Moreover, they constructed
a hard instance in which \ranking performs no better than $0.727$;
we modify their hard instance and improve the hardness to $0.724$.

On a high level, most works on analyzing ranking or similar randomized algorithms
on matching are based on variations of the framework by Karp et al.~\cite{Karp90}.
The basic idea is to relate different bad and good events to form 
constraints in an LP, whose asymptotic behavior is analyzed when $n$ is large.
For \obm, Karp et al.~\cite{Karp90} showed that
ranking achieves performance ratio $1 - \frac{1}{e}$;
similarly, Aggarwal et al.~\cite{Aggarwal2011} also showed that a
modified version of ranking achieves the same ratio for the node-weighted version
of the problem.

Sometimes very sophisticated mappings are used to relate
different events, and produce LPs whose asymptotic behavior is difficult to analyze.
Mahdian and Yan~\cite{Mahdian2011} developed the technique of strongly
factor-revealing LP.  The idea is to consider another family of LPs
whose optimal values are all below the asymptotic value of the original LP.
Hence, the optimal value of any LP (usually a large enough instance) in the new family
can be a lower bound on the performance ratio. The results of~\cite{Mahdian2011}
implies that for the \gmg on bipartite graphs, \ranking
achieves performance ratio $0.696$.

\noindent \textbf{Recent Attempts.}
No attempts have been made in the literature to theoretically
improve the $0.5 + \epsilon$ ratio for arbitrary graphs until 
two recent papers appeared in FOCS 2012.
Poloczek and Szegedy~\cite{Matthias12} used a technique known
as \emph{contrast analysis} to analyze 
the \mrg algorithm and gave ratio $\frac{1}{2}+\frac{1}{256} \approx 0.5039$.
\ignore{However, we could not complete their analysis
and give details about the technical issues in Appendix~\ref{appendix:contrast}.  Although
we could not find a way to resolve the issues, it is possible
that their argument can be fixed.}

Goel and Tripathi~\cite{Goel12} showed
a hardness result of $0.7916$ for any algorithm and $0.75$ for adaptive vertex-iterative algorithms.
They also analyzed the \ranking algorithm for a better performance ratio.  
\ignore{Following the same framework, they relate bad and good events to form LP constraints.
However, we also could not complete the structural analysis, and details are in
Appendix~\ref{appendix:proof}.}  Moreover, they used strongly 
factor-revealing LP to analyze
the asymptotic behavior of their LP; we ran experiment on the LP described in their
paper and could not reproduce the ratio $0.56$.  On the contrary,
we discovered that the optimal value of their original LP drops to $0.5001$ when $n = 400$.
Hence, we do not believe strongly factor-revealing LPs can be used to
analyze their original LP to give a ratio larger than $0.5001$.  We describe the
details in Appendix~\ref{appendix:lp}, which includes a link to source codes if the reader
would like to verify our experimental results.

\noindent \textbf{Continuous LP.}
Duality and complementary slackness properties
of continuous LP were
investigated by Tyndall~\cite{Tyndall65} and Levinson~\cite{Levinson66}.
Anand et al.~\cite{AnandGK12} used continuous LP relaxation to analyze
online scheduling.  

%

\section{Preliminaries} \label{sec:pre}

Let $[n] := \{ 1,2,\ldots,n \}$, $[a..b] := \{ a,a+1,\ldots,b \}$ for $1\leq a\leq b$,
and $\Omega$ be the set of all permutations of the nodes in $V$, where each
permutation is a bijection $\sigma : V \rightarrow [n]$.  The \emph{rank} of node $u$
in $\sigma$ is $\sigma(u)$, where smaller rank means higher priority.

\noindent \textbf{The \ranking algorithm.}
For the \gmg, the algorithm selects a permutation $\sigma \in \Omega$ uniformly at
random, and returns a list $L$ of unordered pairs according to the lexicographical order
induced by $\sigma$.  Specifically, given two pairs $e_1$ and $e_2$ (where for
each $i$, $e_i = \{u_i, v_i\}$ and $\sigma(u_i) < \sigma(v_i)$), the pair $e_1$
has higher priority than $e_2$ if (i) $\sigma(u_1) < \sigma(u_2)$, or (ii) $u_1 = u_2$ and
$\sigma(v_1) < \sigma(v_2)$. 
Each pair of nodes in $G(V,E)$ is probed according to the order given by $L$;
initially, all nodes are \emph{unmatched}.
In the round when the pair $e = \{u,v\}$ is probed,
if both nodes are currently \emph{unmatched} and
the edge $e$ is in $E$, 
then each of $u$ and $v$ is \emph{matched},
and they are each other's \emph{partner} in $\sigma$;
moreover, if $\sigma(u) < \sigma(v)$ in this case, we say that $u$ \emph{chooses} $v$.
Otherwise, if at least one of $u$ and $v$ is already matched or
there is no edge between them in $G$, we skip to
the next pair in $L$ until all pairs in $L$ are probed.

After running \ranking with $\sigma$ (or in general probing with list $L$),
we denote the resulting matching by $M(\sigma)$ (or $M(L)$),
and we say that a node is matched in $\sigma$ (or $L$) 
if it is matched in $M(\sigma)$ (or $M(L)$).
Given a probing list $L$, suppose $L_u$ denotes the probing list
obtained by removing all occurrences of $u$ in $L$ such that $u$ always
remains unmatched.  The following lemma is useful.

\begin{lemma}[Removing One Node.] \label{lemma:removal}
The symmetric difference $M(L) \oplus M(L_u)$
is an alternating path, which contains at least one edge
\emph{iff} $u$ is matched in $L$.
\end{lemma}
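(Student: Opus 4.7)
The plan is to couple the greedy procedures on $L$ and $L_u$: we walk through the probes of $L$ in order, and in parallel run $L_u$ by simply skipping every step whose pair contains $u$. Let $M_t$ and $M'_t$ denote the matchings after $t$ such steps of $L$, so $M_0 = M'_0 = \emptyset$, $M_T = M(L)$, and $M'_T = M(L_u)$. I prove by induction on $t$ the following invariant: either (a) $M_t = M'_t$ and $u$ is unmatched in both, or (b) $M_t \oplus M'_t$ is a single simple alternating path $P_t$ whose endpoints are $u$ (matched in $M_t$ but not $M'_t$) and some node $x_t$ (matched in exactly one of $M_t, M'_t$). Along the way I maintain the auxiliary fact that $u$ and $x_t$ are the only nodes whose matched status differs between the two processes -- every other node is matched in both runs or unmatched in both -- which is what keeps the symmetric difference confined to a single path.

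For the inductive step with the next probe $e_t = \{a,b\}$, there are three cases. First, if $u \in e_t$, only the $L$-process probes: if $u$ is already matched (so we were in state (b)) then nothing happens and $P_t$ is unchanged; otherwise we were in state (a), and $L$ either does nothing or adds the edge $\{u, v\}$, which initializes $P_{t+1}$ as a single edge with $x_{t+1} = v$. Second, if $u \notin e_t$ and $\{a,b\} \cap \{u, x_t\} = \emptyset$, the auxiliary fact says $a$ and $b$ have the same matched status in both processes, so both take identical actions and $P_t$ is unchanged. Third, if $u \notin e_t$ and (WLOG) $a = x_t$, the probe does nothing in whichever of $M_t, M'_t$ has $x_t$ already matched; in the other run, the auxiliary fact gives $b$ the same status in both, so either $b$ is matched in both and nothing happens, or $b$ is unmatched in both and the edge $\{x_t, b\}$ is added on one side only, extending $P_t$ by one edge, turning $x_t$ into an interior vertex, and promoting $b$ to the new endpoint $x_{t+1}$. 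In every sub-case the alternation is preserved because the edge newly added to the symmetric difference is incident to the current endpoint and lies in the opposite matching from the last path edge.

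Taking $t = T$ shows $M(L) \oplus M(L_u) = P_T$ is a single alternating path; and $P_T$ is nonempty iff $u$ is matched in $M(L)$, since a nonempty $P_T$ has $u$ as one of its endpoints matched in $M_T$, while conversely $u$ is matched in $L$ only after the first probe that initializes $P$ to a single edge, after which $P$ never shrinks (no greedy update can remove an edge from the symmetric difference, since removal would require a probe of $\{u, x_t\}$ by the $L_u$-process, which by construction never happens). The main obstacle is the case analysis together with the verification of the auxiliary fact: one has to rule out situations in which the symmetric difference becomes disconnected or forms an even cycle, both of which are precluded by the invariant that at most one vertex besides $u$ is ever divergent at any time, so any new alternation is forced to attach at the unique current endpoint $x_t$.
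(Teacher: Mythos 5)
Your proof is correct and takes essentially the same route as the paper: induction on probing rounds, maintaining the invariant that the symmetric difference is a single alternating path anchored at $u$ with exactly one ``growing'' endpoint. The only cosmetic difference is that the paper first replaces ``run $L_u$ on $G$'' by ``run $L$ on $G_u$ (with $u$ relabeled unavailable)'' and then tracks a single \emph{crucial} node --- treating \emph{matched} and \emph{unavailable} as the same status so that $u$ stops being crucial once it is matched --- whereas you track both divergent endpoints $u$ and $x_t$ directly; these are just two bookkeeping conventions for the same induction.
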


\begin{proof}
Observe that probing $G$ with $L_u$ is equivalent
to probing $G_u$ with $L$, where $G_u$ is exactly the same as $G$ except that
the node $u$ is labeled \emph{unavailable} and will not be matched in any case.
Hence, we will use the same $L$ to probe $G$ and $G_u$, and compare what happens in each round
to the corresponding matchings $M = M(L)$ and $M_u = M(L_u)$.
For the sake of this proof, ``unavailable'' and ``matched''
are the same \emph{availability status}, while ``unmatched'' is a different
availability status.

We apply induction on the number of rounds of probing.  Observe that
the following invariants hold initially. (i) There is exactly one node known
as the \emph{crucial} node (which is initially $u$) that has 
different availability in $G$ and $G_u$. (ii) The symmetric difference 
$M(L) \oplus M(L_u)$ is an alternating path connecting $u$ to the crucial node;
initially, this path is degenerate.

Consider the inductive step.  Observe that the crucial node and $M(L) \oplus M(L_u)$ 
do not change in a
round except for the case when the pair being probed is an edge in $G$ (and $G_u$),
involving the crucial node $w$ with another currently unmatched 
node $v$ in $G$, and hence
$v$ is also unmatched in $G_u$, as the induction hypothesis states that
every other node apart from the crucial node has the same availability in both graphs.
In this case, this edge is added to exactly one of $M$ and $M_u$.  Therefore, $w$ is
matched in both graphs (so no longer crucial), and
$v$ becomes the new crucial node;
moreover, the edge $\{w,v\}$ is added to $M(L) \oplus M(L_u)$, which now
is a path connecting $u$ to $v$.  This completes the
inductive step.

Observe that $u$ is matched in $M$ in the end, \emph{iff} in some round an edge involving $u$
must be added to $M$ but not to $M_u$, which is equivalent to the case when
 $M \oplus M_u$ contains at least one edge.
\end{proof}

The \emph{performance ratio} $r$ of \ranking on $G$ is
the expected number of nodes matched by the algorithm to the number
of nodes in a maximum matching in $G$, where the randomness comes from the
random permutation in $\Omega$.  
We consider the
set $\U := \Omega \times V$ of \emph{instances}; an \emph{event}
is a subset of instances. An instance $(\sigma,u)\in \U$ is \emph{good} 
if $u$ is matched in $\sigma$, and \emph{bad} otherwise.

{\bf Perfect Matching Assumption.} According to Corollary 2 of~\cite{Matthias12} (and
also implied by our Lemma~\ref{lemma:removal}), 
without loss of generality, we can assume that the graph $G(V,E)$ has a perfect matching $M^*\subseteq E$ that matches all nodes in $V$.
For a node $u$, we denote by
$u^*$ the partner of $u$ in $M^*$ and we call $u^*$ the \emph{perfect partner} of $u$.
From now on, we consider \ranking on such a graph
$G$ without mentioning it explicitly again.  Observe that
for all $\sigma \in \Omega$, $(\sigma, \sigma^{-1}(1))$ is always good;
moreover, the performance ratio is the fraction of good instances.


\begin{definition}[$\sigma_u$, $\sigma_u^i$]
For a permutation $\sigma$, let $\sigma_u$ be the permutation obtained by removing $u$ from $\sigma$ while keeping the relative order of other nodes unchanged;
running \ranking with $\sigma_u$ means running $\sigma$ while keeping $u$ always unavailable (or simply deleting $u$ in $G$).
Let $\sigma_u^i$ be the permutation obtained by inserting $u$ into $\sigma_u$ at rank $i$ and keeping the relative order of other nodes unchanged.
\end{definition}


\begin{fact}[\ranking is Greedy]\label{fact:ranking-greedy} Suppose \ranking is run with permutation $\sigma$. If $u$ is unmatched in $\sigma$, then each neighbor $w$ of $u$ (in $G$) is matched to some node $v$ in $\sigma$ with $\sigma(v)<\sigma(u)$.
\end{fact}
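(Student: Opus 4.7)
The plan is to track the moment when the pair $\{u, w\}$ is probed under the lexicographic order induced by $\sigma$. Because $u$ is unmatched at the end of the run, it is in particular unmatched at that moment; consequently, in order for the probe of $\{u, w\}$ not to produce a match, $w$ must already have been matched in some earlier probe, say via the pair $\{v, w\}$ with $v \neq u$ (the case $v = u$ is immediately ruled out, since it would make $u$ matched). So the remaining task is to argue that this $v$ must satisfy $\sigma(v) < \sigma(u)$.

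I would split into two cases depending on the relative ranks of $u$ and $w$. If $\sigma(u) < \sigma(w)$, the pair $\{u, w\}$ sits in the block of pairs whose lower-ranked endpoint is $u$, so any pair $\{v, w\}$ probed earlier must lie in a strictly earlier block, i.e.\ its lower-ranked endpoint has rank strictly less than $\sigma(u)$; since $\sigma(w) > \sigma(u)$, the lower endpoint of $\{v, w\}$ cannot be $w$, so it must be $v$, giving $\sigma(v) < \sigma(u)$. If instead $\sigma(w) < \sigma(u)$, then $\{u, w\}$ sits in $w$'s block, and within this block pairs $\{w, \cdot\}$ are ordered by the rank of the other endpoint; either $\{v, w\}$ was probed in an earlier block (so $\min(\sigma(v),\sigma(w)) < \sigma(w) < \sigma(u)$, forcing $\sigma(v) < \sigma(u)$ in either subcase of who the low endpoint is), or $\{v, w\}$ lies in $w$'s block and appears before $\{w, u\}$, which directly gives $\sigma(v) < \sigma(u)$ from the intra-block ordering rule.

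I do not expect any substantive obstacle; the proof is essentially a direct unfolding of the lexicographic rule stated in the description of \ranking together with the observation that $u$ stays unmatched throughout the run. The only points requiring a little care are eliminating the possibility $v = u$ and keeping the two subcases of where the matching pair $\{v, w\}$ sits straight.
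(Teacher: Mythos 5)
Your proof is correct and takes essentially the same route as the paper's: observe that $w$ must be matched (else the probe of $\{u,w\}$ would match them), then use the fact that $u$ is unmatched throughout together with the lexicographic ordering to force $\sigma(v)<\sigma(u)$. The paper splits the case analysis on whether $w$ or $v$ is the lower-ranked endpoint of the matched pair, while you split on whether $\sigma(u)<\sigma(w)$ or $\sigma(w)<\sigma(u)$, but this is a cosmetic difference in the same direct argument.
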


\ignore{
\begin{proof}
Fix any neighbor $w$ of $u$ and notice that $w$ must be matched otherwise \textsc{Ranking} will match $u$ and $w$.
If $w$ chooses $v$ as the second endpoint in $M(\sigma)$, then we know that $\sigma(v)<\sigma(u)$ since at that moment, $\sigma(u)$ is unmatched;
if $v$ chooses $w$ as the second endpoint in $M(\sigma)$, then we also know that $\sigma(v)<\sigma(u)$ since at that moment, $\sigma(u)$ is unmatched while $v$ is considered first.
\end{proof}
}

Similar to~\cite[Lemma 3]{Matthias12}, the following Fact is an easy corollary of Lemma~\ref{lemma:removal},
by observing that if $(\sigma,u)$ is bad, then $M(\sigma) = M(\sigma_u)$.

\begin{fact}[Symmetric Difference]\label{fact:symmetric-difference}
Suppose $(\sigma,u)$ is bad, and 
$(\sigma_u^i,u)$ is good for some $i$. Then, the symmetric difference  $M(\sigma) \oplus M(\sigma_u^i)$ is an alternating path $P$ with at least one edge,
where except for the endpoints of $P$ (of which $u$ is one),
every other node in $G$ is either matched in both $\sigma$ and $\sigma_u^i$, or unmatched
in both.
%
%
\end{fact}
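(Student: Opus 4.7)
The plan is to derive this fact as a straightforward consequence of Lemma~\ref{lemma:removal}, applied twice. The excerpt already hints at the main idea: since $(\sigma,u)$ is bad, $u$ is unmatched when \ranking is run with $L=\sigma$, so the symmetric difference $M(\sigma) \oplus M(\sigma_u)$ (which is $M(L) \oplus M(L_u)$ for this $L$) contains no edge, hence $M(\sigma) = M(\sigma_u)$. This lets me replace $M(\sigma)$ by $M(\sigma_u)$ throughout.

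Next I would apply Lemma~\ref{lemma:removal} again, this time with $L = \sigma_u^i$: removing $u$ from $\sigma_u^i$ yields $\sigma_u$, so $L_u = \sigma_u$. Since $(\sigma_u^i,u)$ is good, $u$ is matched in $L$, and the lemma yields that $M(\sigma_u^i) \oplus M(\sigma_u)$ is an alternating path $P$ with at least one edge. Combining with the previous step, $M(\sigma) \oplus M(\sigma_u^i) = M(\sigma_u) \oplus M(\sigma_u^i) = P$, as required. To identify $u$ as one endpoint of $P$, I would appeal to the inductive description used in the proof of Lemma~\ref{lemma:removal}: the path constructed there always connects $u$ to the ``crucial'' node, and $u$ lies on the path (it is matched in exactly one of the two runs).

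Finally, I would verify the structural statement about the non-endpoint nodes. This is the standard dichotomy for the symmetric difference of two matchings: every node not lying on $P$ retains the same partner (or non-partner status) in both matchings, hence is either matched in both or unmatched in both; every internal node of $P$ has degree two in $M(\sigma) \oplus M(\sigma_u^i)$, receiving one incident edge from each matching, hence is matched in both (to different partners). Only the two endpoints of $P$ — namely $u$ and the terminal crucial node — have degree one in the symmetric difference and so change matching status.

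I do not expect a real obstacle: the entire statement is a mild repackaging of Lemma~\ref{lemma:removal}. The only delicate point is the bookkeeping of which permutation plays the role of $L$ and which plays the role of $L_u$ in the two applications, together with the observation that the endpoint of $P$ identified inductively in Lemma~\ref{lemma:removal} is precisely $u$.
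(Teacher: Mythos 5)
Your proposal is correct and takes essentially the same approach as the paper: the paper dispatches this Fact with the one-line remark that it is an easy corollary of Lemma~\ref{lemma:removal} once one observes $M(\sigma)=M(\sigma_u)$ for bad $(\sigma,u)$, which is exactly the pivot of your argument. Your expansion — the second application of Lemma~\ref{lemma:removal} to $L=\sigma_u^i$ with $L_u=\sigma_u$, the identification of $u$ as an endpoint via the crucial-node invariant, and the standard degree argument for internal and off-path nodes — is a faithful unpacking of that one-liner.
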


\begin{definition}[$Q_t$, $R_t$ and $S_t$]
For each $t\in[n]$, let $Q_t$ be the good event that the node at rank $t$ is matched, where
$Q_t := \{ (\sigma, u):\sigma\in \Omega, u=\sigma^{-1}(t) \text{ is matched in } \sigma \}$;
similarly, let $R_t$ be the bad event that the node at rank $t$ is unmatched, where
$R_t := \{ (\sigma, u):\sigma\in \Omega, u=\sigma^{-1}(t) \text{ is unmatched in } \sigma \}$.

Moreover, we define the \emph{marginally bad event} $S_t$ at rank $t\in[2..n]$ by
$S_t := \{ (\sigma, u)\in R_t: (\sigma_u^{t-1}, u)\notin R_{t-1} \}$;
observe that $S_1 = R_1 = \emptyset$.

Given any $(\sigma, u) \in \U$, the \emph{marginal position} of $u$ with respect to $\sigma$
is the (unique) rank $t$ such that  $(\sigma_u^t, u) \in S_t$, and is \emph{null} if no such $t$ exists.
\end{definition}

Note that for each $t \in [n]$, $Q_t$ and $R_t$ are disjoint and $|Q_t \cup R_t|=n!$.

\begin{definition}[$x_t$, $\alpha_t$]
For each $t\in[n]$, let $x_t=\frac{|Q_t|}{n!}$ be the probability that a node at rank $t$ is matched, over the random choice of permutation $\sigma$. Similarly, we let $\alpha_t=\frac{|S_t|}{n!}$;
observe that $1 - x_t = \frac{|R_t|}{n!}$.
\end{definition}

Note that the performance ratio is $\frac{1}{n} \sum_{t=1}^n x_t$,
which will be the objective function of our minimization LP.
Observe that all $x_t$'s and $\alpha_t$'s are between $0$ and $1$, and $x_1=1$ and $\alpha_1=0$.
We derive constraints for the variables in the next section.

%



\section{Relating Bad and Good Events to Form LP Constraints} \label{sec:bad-good}

In this section we define some relations between
bad and good events to form LP constraints.  The high level idea
is as follows. Suppose $f$ is a relation between $A$ and $B$, where
$f(a)$ is the set of elements in $B$ related to $a \in A$,
and $f^{-1}(b)$ is the set of elements in $A$ related to $b \in B$.
The \emph{injectivity} of $f$ is the minimum integer $q$ such that
for all $b \in B$, $|f^{-1}(b)| \leq q$.
If $f$ has injectivity $q$,  
we have the inequality $\sum_{a \in A} |f(a)| \leq q |B|$,
which follows from counting the number of edges in the 
bipartite graph induced by $f$ on $A$ and $B$.  In our constructions, usually
calculating $|f(a)|$ is straightforward, but sometimes special attention
is required to bound the injectivity.

\subsection{Monotone Constraints: $x_{t-1} \geq x_t$, $t \in [2..n]$}\label{ssec:basiclemmas}

These constraints follow from Lemma~\ref{lemma:bad-marginalbad}
 as the $\alpha_t$'s are non-negative.
%
\begin{lemma}[Bad-to-Marginally Bad] \label{lemma:bad-marginalbad}
For all $t\in[n]$, we have $1-x_t=\sum_{i=1}^t \alpha_i$; this implies that
for $t \in [2..n]$, $x_{t-1}-x_t = \alpha_t$.
\end{lemma}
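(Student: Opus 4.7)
The plan is to prove the stronger combinatorial identity $|R_t| = \sum_{i=1}^t |S_i|$; dividing by $n!$ yields $1-x_t = \sum_{i=1}^t \alpha_i$, and taking consecutive differences gives $x_{t-1}-x_t = \alpha_t$. I will exhibit a bijection $\phi \colon R_t \to \bigsqcup_{i=1}^t S_i$; note that the $S_i$'s are pairwise disjoint since membership in $S_i$ fixes the rank of $u$ at $i$. Given $(\sigma,u) \in R_t$, let $\pi := \sigma_u$, so $\sigma = \pi^t$, and consider the $0/1$ sequence $b_i := \mathbf{1}[(\pi^i,u) \in R_i]$ for $i \in [n]$. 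The perfect-matching assumption guarantees that $u$ has at least one neighbor, so when \ranking is run on $\pi^1$ the node $u$ at rank $1$ matches in its own first round; thus $b_1 = 0$, while $b_t = 1$ by hypothesis. Let $k = k(\pi,u)$ be the smallest index with $b_k = 1$, and define $\phi(\sigma,u) := (\pi^k, u)$, which lies in $S_k$ with $k \in [2..t]$.

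The whole argument hinges on the following \emph{monotonicity claim}: for fixed $(\pi, u)$ the sequence $b_1, \ldots, b_n$ is non-decreasing; equivalently, if $u$ is matched when \ranking is run on $\pi^{i+1}$ then $u$ is also matched on $\pi^i$. Granted this claim, $\phi$ is a bijection. Well-definedness: once $b$ jumps to $1$ it stays at $1$, so $k$ is unambiguously the first index with $b_k = 1$. Injectivity: the image $(\pi^k, u)$ together with the target rank $t$ forces $\sigma = \pi^t$. Surjectivity: for any $(\tau, u) \in S_i$ with $i \leq t$, writing $\pi = \tau_u$ we have $b_i(\pi,u) = 1$ and $b_{i-1}(\pi,u) = 0$ by definition of $S_i$, so monotonicity yields $b_j(\pi,u) = 0$ for every $j < i$ and $b_t(\pi,u) = 1$; the permutation $\sigma = \pi^t$ then lies in $R_t$ with $\phi(\sigma, u) = (\pi^i, u) = (\tau, u)$.

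To prove the monotonicity claim, I fix $i$, set $w := \pi^{-1}(i)$, and compare two parallel executions of \ranking: Run~P on $\pi^i$ (where $u$ has rank $i$ and $w$ has rank $i+1$) and Run~Q on $\pi^{i+1}$ (with $u$ and $w$ swapped); these permutations differ only by the swap of $u$ and $w$ at ranks $i, i+1$. For each round $j < i$ both runs process the same node $\pi^{-1}(j)$ at rank $j$, and every non-$\{u,w\}$ candidate has identical rank in the two permutations. A short case analysis on which of $u,w$ are currently unmatched neighbors of $\pi^{-1}(j)$ shows that the two runs can diverge at round $j$ only when both $u$ and $w$ are currently unmatched neighbors and there is no unmatched non-$\{u,w\}$ candidate of rank below $i$; in that case Run~P matches $\pi^{-1}(j)$ to $u$ (because $u$'s rank $i$ beats $w$'s rank $i+1$ in $\pi^i$) while Run~Q matches to $w$. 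But then $u$ is matched in Run~P and stays matched forever, so the conclusion holds. Otherwise the two runs are identical through round $i-1$ and enter round $i$ in the same state. Assuming $u$ is matched in Run~Q, either (i) $w$ selects $u$ as its partner in Run~Q's round $i$, so $\{u,w\}$ is an edge and $w$ is an unmatched neighbor of $u$ at the start of Run~P's round $i$; or (ii) $u$ itself selects some neighbor $y$ in Run~Q's round $i+1$, so $y$ is unmatched at the start of Run~Q's round $i+1$ and hence already unmatched at the start of Run~Q's round~$i$, which is also the start of Run~P's round $i$. In either case, when Run~P processes $u$ in round $i$, $u$ has at least one unmatched neighbor and is matched.

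The main obstacle is the monotonicity claim: its statement is intuitive (moving $u$ to a better rank should not hurt $u$), but the proof requires careful bookkeeping because an adjacent-rank swap of $u$ and its neighbor $w$ can propagate backward through \ranking's greedy tie-breaking in earlier rounds. Once monotonicity is established, the bijection and telescoping calculations are routine.
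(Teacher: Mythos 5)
Your proof takes the same high-level route as the paper --- a bijection $R_t \to \bigsqcup_{i=1}^t S_i$ via the ``first rank at which $u$ becomes unmatched'' --- but you make explicit, and actually prove, the monotonicity claim (that the indicator sequence $b_i = \mathbf{1}[\text{$u$ unmatched in }\sigma_u^i]$ is non-decreasing in $i$). The paper never proves this. It is smuggled in through the phrase ``the (unique) rank'' in the definition of marginal position, and is then used twice in the paper's own proof: once to assert that every bad $(\sigma,u)\in R_t$ has marginal position $t_u \leq t$, and again in the surjectivity step to conclude $(\rho_v^t, v)\in R_t$ from $(\rho,v)\in S_i$ with $i\leq t$. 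Neither Lemma~\ref{lemma:removal} nor Fact~\ref{fact:symmetric-difference} yields monotonicity directly (they compare $M(\sigma)$ with $M(\sigma_u^i)$, not $M(\sigma_u^i)$ with $M(\sigma_u^{i+1})$), so this really is a gap in the written argument.

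Your adjacent-swap, parallel-run induction is the right way to close it, and the case analysis you sketch is correct: for rounds $j<i$ the two runs can diverge only when the node at rank $j$ sees both $u$ and $w:=\pi^{-1}(i)$ as its best unmatched neighbours, in which case $u$ is matched in the better run and you are done; otherwise the states coincide entering round $i$, and your cases (i) ($w$ picks $u$) and (ii) ($u$ picks some $y$) exhaust the ways $u$ can become matched at rank $i+1$ (a node at rank $>i+1$ cannot match $u$, by the invariant that an unmatched node never has an unmatched lower-rank neighbour). The surrounding bijection/telescoping argument is then identical in substance to the paper's. In short: same decomposition, but your write-up supplies a necessary lemma the paper leaves implicit, which makes it the more complete of the two.

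Two cosmetic notes: (1) your notation $\pi^i$ for ``insert $u$ into $\pi$ at rank $i$'' clashes with the paper's $\sigma_u^i$ and could be misread as iterated composition; it is safer to stick with $\sigma_u^i$. (2) You do not actually need monotonicity to make $\phi$ \emph{well-defined} (the smallest $k$ with $b_k=1$ exists because $b_1=0$ and $b_t=1$, and the minimality alone gives $b_{k-1}=0$, hence $(\sigma_u^k,u)\in S_k$); monotonicity is needed precisely for surjectivity and, in the paper's formulation, for the marginal position to be unique in the first place. Flagging this distinction would sharpen the exposition.
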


\begin{proof}
Fix $t \in [n]$. From the definitions of $x_t$ and $\alpha_t$, it suffices to provide a bijection $f$ from $R_t$ to $\cup_{i=1}^t S_i$.
Suppose $(\sigma,u)\in R_t$.  This means $(\sigma, u)$ is bad, and
hence $u$ has a marginal position $t_u \leq t$ with respect to $\sigma$.
We define $f(\sigma, u) := (\sigma_u^{t_u},u) \in \cup_{i=1}^t S_i$.

\emph{Surjective:} for each $(\rho, v)\in \cup_{i=1}^t S_i$,
the marginal position of $v$ with respect to $\rho$ is some $i \leq t$;
hence, it follows that $(\rho_v^t, v) \in R_t$ is bad, and
we have $f(\rho_v^t,v) = (\rho, v)$.  

\emph{Injective:}
if we have $f(\sigma, u) = (\rho, v)$, it must be the case that
$u= v$, $\sigma(u) = t$, and $\rho = \sigma_u^i$ for some $i$; this implies
that $\sigma$ must be $\rho_v^t$.

Hence, $|R_t|=|\cup_{i=1}^t S_i|=\sum_{i=1}^t |S_i|$, which is equivalent to $1-x_t=\sum_{i=1}^t \alpha_i$, if we divide the equality by $n!$ on both sides.
\end{proof}

\subsection{Evolving Constraints: $\left(1-\frac{t-1}{n} \right) x_t + \frac{2}{n} \sum_{i=1}^{t-1} x_i \geq 1$, $t \in [2..n]$}\label{ssec:movingdown}

The monotone constraints require that the $x_t$'s do not
increase.  We next derive the \emph{evolving}
constraints that prevent the $x_t$'s from dropping too fast.
Fix $t \in [2..n]$.  We shall define a relation $f$
between $\cup_{i=1}^t S_i$ and $\cup_{i=1}^{t-1} Q_i$
such that $f$ has injectivity $1$, and 
for $(\sigma, u) \in S_i$, $|f(\sigma,u)| = n - i + 1$.
This implies Lemma~\ref{lemma:1-to-n-t+1};
from Lemma~\ref{lemma:bad-marginalbad}, we can express
$\alpha_i = x_{i-1}-x_i$ (recall $\alpha_1=0$), and rearrange the terms to obtain the required
constraint.


\begin{lemma}[$1$-to-($n-i+1$) Mapping]\label{lemma:1-to-n-t+1}
For all $t \in [2..n]$, we have $\sum_{i=1}^{t} (n-i+1) \alpha_i \leq \sum_{i=1}^{t-1} x_i$.
\end{lemma}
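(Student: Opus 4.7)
The plan is to produce a relation $f$ between $\cup_{i=1}^{t} S_i$ and $\cup_{i'=1}^{t-1} Q_{i'}$ that has injectivity $1$ and satisfies $|f(\sigma,u)| = n-i+1$ for every $(\sigma,u) \in S_i$; counting the edges of the induced bipartite graph then yields $\sum_{i=1}^{t}(n-i+1)|S_i| \le \sum_{i'=1}^{t-1}|Q_{i'}|$, and the lemma follows upon dividing by $n!$.

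For the construction I would exploit the perfect matching assumption. Given $(\sigma,u) \in S_i$, let $u^*$ be the perfect partner of $u$, and let $w$ be the partner of $u^*$ in $M(\sigma)$. Since $u^*$ is a neighbor of the unmatched node $u$, Fact~\ref{fact:ranking-greedy} ensures $w$ exists and $\sigma(w) < \sigma(u) = i$. I would then define
\[
f(\sigma, u) := \{\, (\sigma_u^j, w) : j \in [i..n] \,\}.
\]
To check each image element is a good instance in $\cup_{i'=1}^{t-1}Q_{i'}$, the crucial observation is that $M(\sigma_u^j) = M(\sigma)$ for every $j \ge i$. This follows from a monotonicity argument: since $u$ is already unmatched at rank $i$ in $\sigma_u^i = \sigma$, any additional preprocessing in $\sigma_u^j$ for $j > i$ can only make $u$'s neighbors even less available, so $u$ remains unmatched, and Lemma~\ref{lemma:removal} applied to $\sigma_u^j$ then gives $M(\sigma_u^j) = M(\sigma_u) = M(\sigma)$. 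Because $\sigma(w) < i \le j$, the node $w$ keeps rank $\sigma(w) \le i - 1 \le t - 1$ in $\sigma_u^j$ and is still matched to $u^*$, so $(\sigma_u^j, w) \in Q_{\sigma(w)}$; and the permutations $\sigma_u^j$ for $j \in [i..n]$ are pairwise distinct, giving $|f(\sigma,u)| = n - i + 1$.

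The main obstacle is verifying injectivity, namely that any $(\rho, v)$ in the image uniquely determines the triple $(\sigma, u, j)$ producing it. From $(\rho, v)$ I would first recover $u^*$ as the $M(\rho)$-partner of $v$ (well-defined because $v = w$ is matched in $M(\rho) = M(\sigma)$); this forces $u = (u^*)^*$ and $j = \rho(u)$, and finally $\sigma = \rho_u^{i_0}$ where $i_0$ is the marginal position of $u$ with respect to $\rho_u$. The delicate point is uniqueness of $i_0$, which relies on a two-sided monotonicity of ranking---$u$ matched at rank $k$ in $\rho_u^k$ implies matched at every $k' \le k$, and $u$ unmatched at rank $k$ implies unmatched at every $k' \ge k$---so the matched status of $u$ has exactly one transition point, pinning down $i_0$ and hence the entire preimage.
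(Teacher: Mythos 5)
Your proof is correct and follows essentially the same route as the paper: the same relation $f(\sigma,u) := \{(\sigma_u^j, w) : j \in [i..n]\}$ where $w$ is the $M(\sigma)$-partner of $u^*$, the same use of Fact~\ref{fact:ranking-greedy} to bound $\sigma(w) \le i-1 \le t-1$, and the same injectivity argument that recovers $u$ from the $\rho$-partner of $v$ and then $\sigma$ via the marginal position. The only stylistic difference is that you spell out the monotonicity of $u$'s unmatched status as $u$ is pushed to later ranks, whereas the paper folds this silently into its definition of a \emph{unique} marginal position; the underlying assumption is identical.
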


\begin{proof}
We define a relation $f$ between $A := \cup_{i=1}^t S_i$ and $B := \cup_{i=1}^{t-1} Q_i$.
Let $(\sigma, u) \in A$ be a marginally bad instance.
Then, there exists a unique $i \in [2..t]$
such that
$(\sigma, u) \in S_i$. If we move $u$ to any position $j \in [i..n]$,
$(\sigma_u^j, u)$ is still bad, because $i$ is the marginal
position of $u$ with respect to $\sigma$.  Moreover, 
observe that $M(\sigma_u) = M(\sigma) =M(\sigma_u^j)$ for all $j \in [i..n]$.
Hence, it follows that for all $j \in [i..n]$, node $u$'s perfect partner
$u^*$ is matched in $\sigma_u^j$ to the 
same node $v$ such that $\sigma(v) = \sigma_u^j(v) \leq i-1 \leq t-1$,
where the first inequality follows from Fact~\ref{fact:ranking-greedy}.
In this case, we define $f(\sigma,u) := \{(\sigma_u^j, v): j \in [i..n]\} \subset B$,
and it is immediate that $|f(\sigma, u)| = n - i + 1$.

%
%
%
%



\noindent \emph{Injectivity.}
Suppose $(\rho, v) \in B$ is related to some $(\sigma, u) \in A$.
It follows that $v$ must be matched to $u^*$ in $\rho$; hence, $u$ is uniquely
determined by $(\rho,v)$. Moreover, $(\rho, u)$ must be bad, and suppose
the marginal position of $u$ with respect to $\rho$ is $i$, which is also uniquely determined.
Then, it follows that $\sigma$ must be $\rho_u^i$.
Hence, $(\rho,v)$ can be related to at most one element in $A$.

Observing that $S_1 = \emptyset$,
the result follows from $\sum_{i=1}^t (n - i +1) |S_i| = \sum_{a \in A} |f(a)| \leq |B| = \sum_{i=1}^{t-1} |Q_i|$, since $|S_i| = n! \alpha_i$ and $|Q_i| = n! x_i$.
\end{proof}

\subsection{Boundary Constraint: $x_n + \frac{3}{2n} \sum_{i=1}^n x_i \geq 1$}\label{ssec:movingaround}

According to experiments, the monotone and the evolving constraints alone
cannot give ratio better than 0.5.  The \emph{boundary constraint} is crucial
to the success of our LP, and hence we analyze our construction carefully.  
The high level idea is that we define a relation $f$
between $R_n$ and $Q := \cup_{i=1}^n Q_i$.  As we shall see,
it will be straightforward to show that $|f(a)| = 2n$ for
each $a \in R_n$, but it will require some work to show that
the injectivity is at most 3.  Once we have established these results,
the boundary constraint follows immediately
from $\sum_{a \in R_n} |f(a)| \leq 3 |Q|$,
 because $\frac{|R_n|}{n!} = 1 - x_n$
and $\frac{|Q_i|}{n!} = x_i$.

\noindent \textbf{Defining relation $f$ between $R_n$ and $Q$.}
Consider a bad instance $(\sigma,u) \in R_n$.  We define $f(\sigma, u)$
such that for each $i \in [n]$, $(\sigma,u)$ \emph{produces}
exactly two good instances of the form $(\sigma_u^i, *)$.

\begin{tabular}{|p{\textwidth}|}
\hline
\vspace{0.1cm}
For each $i\in[n]$, we consider $\sigma_u^i$:
\begin{enumerate}
\item if $u$ is unmatched in $\sigma_u^i$: \texttt{($u$ and $u^*$ cannot be both unmatched)}

$\textsf{R(1)}$: produce $(\sigma_u^i,u^*)$ and include it in $f(\sigma,u)$;

$\textsf{R(2)}$: let $v$ be the partner of $u^*$ in $\sigma_u^i$; produce $(\sigma_u^i,v)$ and include it in $f(\sigma,u)$.

\item if $u$ is matched in $\sigma_u^i$:

$\textsf{R(3)}$: produce $(\sigma_u^i,u)$ and include it in $f(\sigma,u)$;
\begin{enumerate}
\item if $u^*$ is matched to $u$ in $\sigma_u^i$:

$\textsf{R(4)}$: produce $(\sigma_u^i,u^*)$ and include it in $f(\sigma,u)$;

\item if $u^*$ is matched to $v\neq u$ in $\sigma_u^i$:

$\textsf{R(5)}$: produce $(\sigma_u^i,v)$ and include it in $f(\sigma,u)$;

\item if $u^*$ is unmatched in $\sigma_u^i$: \texttt{(all neighbors of $u^*$ in $G$ must be matched)}

$\textsf{R(6)}$: let $v_o$ be the partner of $u^*$ in $\sigma$, produce $(\sigma_u^i,v_o)$ and include it in $f(\sigma,u)$.
\end{enumerate}
\end{enumerate} \\
\hline
\end{tabular}

Observe that for $i \in [6]$, applying each rule $\textsf{R(i)}$ produces exactly one good instance.  Moreover, for each $i\in [n]$, when
we consider $\sigma_u^i$, exactly $2$ rules will be applied: if $u$ is unmatched in $\sigma_u^i$, then $\textsf{R(1)}$ and $\textsf{R(2)}$ will be applied; if $u$ is matched in $\sigma_u^i$, then $\textsf{R(3)}$ and one of $\{\textsf{R(4)}, \textsf{R(5)}, \textsf{R(6)}\}$ will be applied.

\begin{observation}
For each $(\sigma, u)\in R_n$, we have $|f(\sigma,u)|=2n$.
\end{observation}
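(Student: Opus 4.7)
My plan is to argue, for a fixed bad instance $(\sigma,u) \in R_n$, that the $n$ passes of the construction (one per rank $i \in [n]$) each contribute exactly two \emph{distinct} good instances to $f(\sigma,u)$, and that instances contributed by different ranks cannot coincide. Since the text already notes that exactly two rules fire per pass, the content of the observation is that no collisions occur, so multiplying $2 \cdot n$ gives the desired $|f(\sigma,u)| = 2n$.

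\textbf{Different ranks give different first coordinates.} First I would note that $\sigma_u^i$ places $u$ at rank $i$, so the permutations $\sigma_u^1,\ldots,\sigma_u^n$ are pairwise distinct. Hence any two instances produced in different passes already disagree on their first coordinate, and it suffices to show that within a single pass the two contributed instances share the first coordinate $\sigma_u^i$ but differ on their second coordinate.

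\textbf{Within one pass.} Fix $i$ and look at $\sigma_u^i$. If $u$ is unmatched in $\sigma_u^i$, then by \ranking's greediness (Fact~\ref{fact:ranking-greedy}) the neighbor $u^*$ must be matched; rules R(1) and R(2) add $(\sigma_u^i,u^*)$ and $(\sigma_u^i,v)$ with $v$ the partner of $u^*$, and these differ because $v \neq u^*$. If instead $u$ is matched in $\sigma_u^i$, R(3) contributes $(\sigma_u^i,u)$ and exactly one of R(4), R(5), R(6) fires according to the status of $u^*$; in R(4) the second coordinate is $u^* \neq u$, in R(5) it is $v \neq u$ by the case hypothesis, and in R(6) it is $v_o$, the partner of $u^*$ in the \emph{original} permutation $\sigma$. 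For R(6) I need two facts: $v_o$ exists because $(\sigma,u) \in R_n$ makes $u$ unmatched in $\sigma$, so Fact~\ref{fact:ranking-greedy} forces $u$'s neighbor $u^*$ to be matched in $\sigma$; and $v_o \neq u$, again because $u$ is unmatched in $\sigma$.

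\textbf{Goodness and the main obstacle.} The last check is that the two produced instances in each pass actually lie in $Q$. This is immediate for R(1)--R(5) from the case hypotheses. The only slightly subtle point, and essentially the only obstacle in the proof, is R(6): one has to invoke Fact~\ref{fact:ranking-greedy} a \emph{second} time, now inside $\sigma_u^i$, to see that the unmatched $u^*$ forces every neighbor of $u^*$ in $G$ to be matched; since $v_o$ is adjacent to $u^*$ (they are matched in $\sigma$), $v_o$ is matched in $\sigma_u^i$ and $(\sigma_u^i,v_o)$ is good. Combining everything yields $2$ distinct good instances per $i$ and $2n$ in total, proving $|f(\sigma,u)| = 2n$.
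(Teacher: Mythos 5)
Your proof is correct and follows the same approach as the paper, which states the observation without a formal proof, relying on the preceding remark that exactly two rules fire per rank $i$. You usefully make explicit the two details the paper leaves implicit — that the pair of instances produced within each pass are distinct (especially for R(6), where one must invoke Fact~\ref{fact:ranking-greedy} twice to establish both that $v_o$ exists and that $(\sigma_u^i,v_o)$ is good), and that instances from different ranks differ in their first coordinate.
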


\begin{observation} \label{obs:unique_u}
If $(\rho,x) \in f(\sigma,u)$,
then $\sigma = \rho_u^n$ and exactly one rule can be applied
to $(\sigma, u)$ to produce $(\rho,x)$.
\end{observation}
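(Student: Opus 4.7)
The plan is to unpack the definition of $f$ and verify the two conclusions in sequence. The first conclusion, $\sigma = \rho_u^n$, is essentially bookkeeping: by the construction of $f$, $(\rho, x)$ can only arise when $\rho = \sigma_u^i$ for some $i \in [n]$, so $\rho_u = \sigma_u$; since $(\sigma, u) \in R_n$ already places $u$ at rank $n$ in $\sigma$, we get $\sigma = \sigma_u^n = \rho_u^n$. This also shows that the index $i = \rho(u)$ is uniquely determined by $\rho$, so the rank at which $u$ was inserted to produce $(\rho, x)$ is forced.

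For the second conclusion I would fix this forced $i$. As the paragraph preceding the observation already records, exactly two rules fire at $\sigma_u^i$: either $\textsf{R(1)}$ and $\textsf{R(2)}$ if $u$ is unmatched in $\sigma_u^i$, or $\textsf{R(3)}$ together with exactly one of $\textsf{R(4)}, \textsf{R(5)}, \textsf{R(6)}$ if $u$ is matched in $\sigma_u^i$ (the three subcases being mutually exclusive according to the status of $u^*$ in $\sigma_u^i$). So the goal reduces to checking, case by case, that the two rules that fire produce \emph{distinct} good instances.

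In the case where $u$ is unmatched in $\sigma_u^i$, $\textsf{R(1)}$ produces $u^*$ and $\textsf{R(2)}$ produces the partner $v$ of $u^*$ in $\sigma_u^i$; these are distinct since a node cannot be its own partner. I would also note in passing that $\textsf{R(2)}$ is well-defined: if both $u$ and $u^*$ were unmatched in $\sigma_u^i$, then Fact~\ref{fact:ranking-greedy} applied to $\sigma_u^i$ would force $u^*$ (a neighbor of $u$) to be matched, a contradiction, so $u^*$ is indeed matched and $v$ exists.

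In the case where $u$ is matched in $\sigma_u^i$, $\textsf{R(3)}$ produces $u$, and the paired rule produces either $u^*$ ($\textsf{R(4)}$), the partner $v \neq u$ of $u^*$ in $\sigma_u^i$ ($\textsf{R(5)}$), or $v_o$, the partner of $u^*$ in $\sigma$ ($\textsf{R(6)}$). Distinctness from $u$ is trivial for $\textsf{R(4)}$ and built into the definition for $\textsf{R(5)}$; for $\textsf{R(6)}$, since $(\sigma, u) \in R_n$ means $u$ is unmatched in $\sigma$, $u$ cannot be $u^*$'s partner in $\sigma$, so $v_o \neq u$. The step I expect to need the most care is this $\textsf{R(6)}$ subcase, because one must reach back into the original $\sigma$ rather than $\sigma_u^i$ to define $v_o$; here the perfect-matching assumption together with Fact~\ref{fact:ranking-greedy} applied to $\sigma$ ensures $v_o$ exists (every neighbor of the unmatched $u$ is matched in $\sigma$) and pins down its value uniquely.
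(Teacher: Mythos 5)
Your proof is correct, and it takes exactly the approach the paper intends: the paper states this as an observation with no written proof, and your argument supplies the natural filling-in. You correctly note that $\rho = \sigma_u^i$ forces both $\sigma = \rho_u^n$ and $i = \rho(u)$, and then you verify case by case that the two rules which fire at a fixed $i$ always produce instances with distinct second coordinates (including the R(6) subcase, where $v_o \neq u$ follows from $u$ being unmatched in $\sigma$).
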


\noindent \textbf{Bounding Injectivity.} We first show that
different bad instances in $R_n$ cannot produce the same good instance using the same rule.

\begin{lemma}[Rule Disjunction] \label{lemma:rule-disjoint}
For each $i \in [6]$, any $(\rho,x) \in Q$
can be produced by at most one $(\sigma, u) \in R_n$ using $\textsf{R(i)}$.
\end{lemma}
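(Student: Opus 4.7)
The plan is to leverage Observation \ref{obs:unique_u}, which forces $\sigma = \rho_u^n$ once $u$ is known. Thus for each rule $\textsf{R(i)}$, it suffices to give a recipe that reads off a \emph{unique} candidate $u$ from $(\rho, x)$, using only the graph $G$ (which determines $M^*$, and so the perfect-partner map $y \mapsto y^*$) together with the matching $M(\rho)$ (determined by $\rho$). For $\textsf{R(1)}$, $\textsf{R(3)}$, and $\textsf{R(4)}$, the rule directly identifies $x$ with $u$ or $u^*$, giving $u = x$ or $u = x^*$. For $\textsf{R(2)}$ and $\textsf{R(5)}$, $x$ is by construction the partner of $u^*$ in $M(\rho)$; so I would read off $u^*$ as $x$'s partner in $M(\rho)$ and then set $u = (u^*)^*$. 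Each of these recipes yields a unique $u$, so at most one producer exists for each of these five rules.

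The real work is $\textsf{R(6)}$: here $u^*$ is \emph{unmatched} in $\rho$, and $x = v_o$ is the partner of $u^*$ in $\sigma$ rather than in $\rho$, so the partner-in-$\rho$ trick fails. My plan is to bootstrap from the alternating-path characterization of Fact \ref{fact:symmetric-difference}: since $(\sigma, u)$ is bad and $(\rho, u) = (\sigma_u^i, u)$ is good, $P := M(\sigma) \oplus M(\rho)$ is an alternating path with $u$ as one endpoint and every other node having the same matched status in $\sigma$ and $\rho$. Since $u^*$ flips from matched-to-$x$ in $\sigma$ to unmatched in $\rho$, it must be the other endpoint of $P$, so the terminal edge of $P$ is $\{x, u^*\} \in M(\sigma)$, preceded by the $M(\rho)$-edge $\{x, w\}$ where $w$ is $x$'s partner in $M(\rho)$ (readable from $\rho$ alone). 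Using the identity $M(\sigma) = M(\sigma_u) = M(\rho_u)$, which follows from Lemma \ref{lemma:removal} together with the fact that $u$ is unmatched in $\sigma$, I would traverse $P$ back from $x$ to $u$, alternating between $M(\rho)$-edges and $M(\rho_u)$-edges, and declare the resulting endpoint as the unique $u$.

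The hard part will be the mild circularity in this traversal: identifying the $M(\rho_u)$-edges along $P$ ostensibly requires already knowing $u$, which is what we are trying to recover. I plan to resolve this by a contradiction-style argument: if $u_1 \neq u_2$ both produced $(\rho, x)$ via $\textsf{R(6)}$, the two paths $P_1, P_2$ would share the common terminal segment $u_j^*$--$x$--$w$ (since $w = x$'s partner in $M(\rho)$ is fixed by $(\rho, x)$), and I would propagate this common structure using Fact \ref{fact:ranking-greedy} applied to $\sigma_j = \rho_{u_j}^n$ (which differs from $\rho$ only in the rank of $u_j$) to derive an inconsistency in how Ranking processes the pair involving $w$ and the next path node. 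Equivalently, one can try to characterize the non-$\rho$ edges of $P$ as a deterministic walk on $\rho$ whose only $u$-dependence is the identity of the final endpoint; this would pin $u$ down as the unique extra endpoint of the traversal, completing the injectivity bound for $\textsf{R(6)}$.
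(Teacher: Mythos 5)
Your recipes for $\textsf{R(1)}$ through $\textsf{R(5)}$ are exactly what the paper does: each of these rules lets you read $u$ off from $x$, from $x$'s partner in $M(\rho)$, and the perfect-matching map, and Observation~\ref{obs:unique_u} then fixes $\sigma = \rho_u^n$. That part is fine.

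For $\textsf{R(6)}$, however, you have identified the right ingredients --- Fact~\ref{fact:symmetric-difference} and Fact~\ref{fact:ranking-greedy} --- but you have not actually closed the argument, and you say so yourself: you acknowledge the circularity in your path-traversal recipe (to walk along the $M(\rho_u)$-edges of $P$ you'd already need to know $u$), and your proposed fix is a plan (``propagate this common structure\ldots to derive an inconsistency in how Ranking processes the pair involving $w$ and the next path node'') rather than a proof. It is not clear that chasing the paths through $w$ ever terminates in a clean contradiction, and the vagueness is precisely at the step that makes $\textsf{R(6)}$ the hard case.

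The paper's argument is more local and avoids the path traversal entirely. Suppose $u_1 \neq u_2$ both produce $(\rho,x)$ via $\textsf{R(6)}$. Then $u_1^*, u_2^*$ are distinct and both unmatched in $\rho$, and $x$ is a $G$-neighbor of both (since $u_j^*$ is matched to $x$ in $\sigma_j$). WLOG $\rho(u_1^*) < \rho(u_2^*)$, hence $\sigma_2(u_1^*) < \sigma_2(u_2^*)$ since $\sigma_2 = \rho_{u_2}^n$ leaves the relative order of $u_1^*$ and $u_2^*$ unchanged. Now compare $\sigma_2$ to $\rho$: by Fact~\ref{fact:symmetric-difference} the only nodes that flip matched-status are the two endpoints of the alternating path, namely $u_2$ and $u_2^*$; so $u_1^*$, which is unmatched in $\rho$, is also unmatched in $\sigma_2$. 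But $x$ is a neighbor of $u_1^*$ and is matched to $u_2^*$ in $\sigma_2$ with $\sigma_2(u_2^*) > \sigma_2(u_1^*)$, contradicting Fact~\ref{fact:ranking-greedy} (which forces $x$'s partner to have rank strictly below the unmatched $u_1^*$). You should replace your traversal sketch with a finished argument along these lines; as written, the $\textsf{R(6)}$ case has a genuine gap.
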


\begin{proof}
Suppose $(\rho,x) \in Q$ is produced using a particular rule $\textsf{R(i)}$
by some $(\sigma, u) \in R_n$.  We wish to show that in each case $i \in [6]$,
we can recover $u$ uniquely, in which case $\sigma$ must be $\rho_u^n$.

The first 5 cases are simple. Let $y$ be the partner of $x$ in $\rho$.
If $i=1$ or $i=4$, we know that $x=u^*$ and hence we can recover $u=x^*$;
if $i=2$ or $i=5$, we know that $y=u^*$ and hence we can recover $u=y^*$;
if $i=3$, we know that $u=x$.

For the case when $i=6$, we need to do a more careful analysis.
Suppose $\textsf{R(6)}$ is applied to $(\sigma,u) \in R_n$ to produce $(\rho,x)$.
Then, we can conclude the following:
(i) in $\sigma = \rho_u^n$, $u$ is unmatched, and $u^*$ is matched to $x$; 
(ii) in $\rho$, $u$ is matched, $u^*$ is unmatched, and $x$ is matched.

For contradiction's sake, assume that $u$ is not unique and there are two
$u_1 \neq u_2$ that satisfy the above properties.
It follows that
$u_1^*\neq u_2^*$ and according to property (ii),
in $\rho$, both $u_1$ and $u_2$ are matched, and
 both $u_1^*$ and $u_2^*$ are unmatched;
hence, all 4 nodes are distinct.
Without loss of generality, we assume that $\rho(u_1^*) < \rho(u_2^*)$.
Let $\sigma_2 := \rho_{u_2}^n$, and observe that
$\sigma_2(u_1^*) < \sigma_2(u_2^*)$.

Now, suppose we start with $\sigma_2$,
and consider what happens when $u_2$ is promoted in $\sigma_2$ resulting in $\rho$.
Observe that $u_2$ changes from unmatched in $\sigma_2$ to matched
in $\rho$, and by property (i), $u_2^*$ changes
from matched in $\sigma_2$ to unmatched in $\rho$.
From Fact~\ref{fact:symmetric-difference}, every other node
must remain matched or unmatched in both $\sigma_2$ and $\rho$;
in particular, $u_1^*$ is unmatched in $\sigma_2$.
However, $x$ is a neighbor of both $u_1^*$ and $u_2^*$ (in $G$), and
$\sigma_2(u_1^*) < \sigma_2(u_2^*)$, but $x$ is matched to
$u_2^*$ in $\sigma_2$; this contradicts Fact~\ref{fact:ranking-greedy}.
\end{proof}

Lemma~\ref{lemma:rule-disjoint} immediately implies that
the injectivity of $f$ is at most 6.  However, to show a better bound of 3,
we need to show that some of the rules cannot be simultaneously applied to produce
the same good instance $(\rho, x)$.
We consider two cases for the remaining analysis.

\noindent \textbf{Case (1): $x$ is matched to $x^*$ in $\rho$}

\begin{lemma}\label{lemma:x-to-x*}
For $(\rho,x)\in Q$, if $x$ is matched to $x^*$ in $\rho$, then we have $|f^{-1}(\rho,x)|\leq 3$.
\end{lemma}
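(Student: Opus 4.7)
The plan is to rule out three of the six rules under the hypothesis that $x$ and $x^*$ are partners in $\rho$. Combined with Lemma~\ref{lemma:rule-disjoint}, which guarantees at most one preimage per rule, this would give the target bound of $3$. Concretely, I would argue that $\textsf{R(1)}$, $\textsf{R(2)}$, and $\textsf{R(5)}$ are all incompatible with the hypothesis, so only $\textsf{R(3)}$, $\textsf{R(4)}$, and $\textsf{R(6)}$ can contribute preimages of $(\rho,x)$.

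The elimination would be a short case analysis using the identification of $u$ forced by each rule, as extracted in the proof of Lemma~\ref{lemma:rule-disjoint}. If $\textsf{R(1)}$ produced $(\rho,x)$, we would have $x=u^*$, hence $u=x^*$; but $\textsf{R(1)}$ requires $u$ unmatched in $\rho=\sigma_u^i$, while the hypothesis makes $x^*$ matched (to $x$) in $\rho$, a contradiction. Rule $\textsf{R(2)}$ would make $x$ the partner of $u^*$ in $\rho$, forcing $u^*=x^*$ and hence $u=x$; this conflicts with the rule's precondition that $u$ be unmatched in $\rho$, since $x$ is matched in $\rho$. For $\textsf{R(5)}$, the same identification $u^*=x^*$ yields $u=x$, which collides with the rule's requirement that the partner $v$ of $u^*$ in $\rho$ satisfy $v\neq u$ (we would have $v=x=u$).

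After this elimination, I would invoke Lemma~\ref{lemma:rule-disjoint} to conclude that $\textsf{R(3)}$, $\textsf{R(4)}$, and $\textsf{R(6)}$ each admit at most one bad instance mapping to $(\rho,x)$, and by Observation~\ref{obs:unique_u} distinct rules account for disjoint contributions to $f^{-1}(\rho,x)$. Summing gives $|f^{-1}(\rho,x)|\leq 3$. I do not expect any significant obstacle here: the technically delicate work on injectivity, particularly the case analysis for $\textsf{R(6)}$ using Facts~\ref{fact:symmetric-difference} and~\ref{fact:ranking-greedy}, has already been carried out in Lemma~\ref{lemma:rule-disjoint}. The structural hypothesis that $x$ is matched to $x^*$ in $\rho$ is precisely the right lever here, because it forces $u\in\{x,x^*\}$ in each of the three ``dead'' rules and immediately clashes with the matched/unmatched status those rules demand.
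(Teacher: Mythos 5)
Your proof is correct and follows essentially the same route as the paper: you eliminate rules $\textsf{R(1)}$, $\textsf{R(2)}$, and $\textsf{R(5)}$ using the hypothesis that $x$ is matched to $x^*$, then invoke Lemma~\ref{lemma:rule-disjoint} on the three remaining rules. The minor difference is that you phrase the eliminations for $\textsf{R(2)}$ and $\textsf{R(5)}$ by deriving $u=x$ and contradicting a rule precondition, whereas the paper derives $x\neq u$ and concludes $u^*\neq x^*$ directly; these are contrapositives of one another and equally valid.
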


\begin{proof}
If $(\rho,x)$ is produced using $\textsf{R(1)}$, then $x^*$ must be unmatched in $\rho$;
if $(\rho,x)$ is produced by $(\sigma,u)$ using $\textsf{R(2)}$, then $x$ must be matched to $u^* \, (\neq x^*)$ in $\rho$ since $x\neq u$; similarly, 
if $(\rho,x)$ is produced by $(\sigma,u)$ using $\textsf{R(5)}$, then $x \, (\neq u)$ must be matched to $u^* \, (\neq x^*)$ in $\rho$.

Hence, $(\rho,x)$ cannot be produced by $\textsf{R(1)}$, $\textsf{R(2)}$ or $\textsf{R(5)}$,
and at most three remaining rules can produce it.
It follows from Lemma~\ref{lemma:rule-disjoint} that
 $|f^{-1}(\rho,x)|\leq 3$.
\end{proof}

\noindent \textbf{Case (2): $x$ is not matched to $x^*$ in $\rho$}

\begin{observation}[Unused Rule]\label{observation:unused-rule}
For $(\rho,x)\in Q$, if $x$ is not matched to $x^*$ in $\rho$, then $(\rho,x)$ cannot
be produced by applying $\textsf{R(4)}$.
\end{observation}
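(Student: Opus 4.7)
The plan is to unpack the definition of rule $\textsf{R(4)}$ directly and observe that it forces the produced good instance to have the form of a perfect-partner pair in the matching, which contradicts the hypothesis.

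More concretely, suppose for contradiction that some bad instance $(\sigma,u)\in R_n$ produces $(\rho,x)$ by applying $\textsf{R(4)}$. Then by the very description of the rule, we must be in case 2(a) when examining $\sigma_u^i$ for some $i$, so that $\rho = \sigma_u^i$, the node $u$ is matched in $\rho$, and its partner in $\rho$ is exactly $u^*$; moreover, the rule emits the pair $(\sigma_u^i, u^*)$, so $x = u^*$.

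Now simply read off the perfect partners: since $x = u^*$, we have $x^* = (u^*)^* = u$ in the perfect matching $M^*$. But in $\rho$ the partner of $u^*$ is $u$, i.e.\ the partner of $x$ is $x^*$. This directly contradicts the hypothesis that $x$ is not matched to $x^*$ in $\rho$, and the observation follows. There is no real obstacle here; the whole content is a one-line unfolding of the definition of $\textsf{R(4)}$, which is precisely the rule that fires when $u$ and $u^*$ are matched to each other, so it can only produce good instances that pair a node with its perfect partner.
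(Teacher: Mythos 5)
Your proof is correct and is precisely the one-line unfolding of the definition of $\textsf{R(4)}$ that the paper has in mind (the paper states this as an Observation without proof). Nothing more is needed.
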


Out of the remaining 5 rules,
we show that $(\rho,x)$ can be produced from at most one of $\{\textsf{R(2)},\textsf{R(5)}\}$, 
and at most two of $\{\textsf{R(1)},\textsf{R(3)},\textsf{R(6)}\}$.  After
we show these two lemmas, we can immediately conclude from Lemma~\ref{lemma:rule-disjoint} that
$|f^{-1}(\rho,x)|\leq 3$ and complete the case analysis.


\begin{lemma}[One in $\{\textsf{R(2)},\textsf{R(5)}\}$]\label{lemma:combine-rule-2-and-5}
Each $(\rho,x) \in Q$ cannot be produced from both \textsf{R(2)} and \textsf{R(5)}.
\end{lemma}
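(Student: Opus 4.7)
The plan is a short proof by contradiction that only needs to read off what the two rules force about $M(\rho)$. I assume that some $(\rho,x)\in Q$ is produced both by $(\sigma_1,u_1)\in R_n$ via \textsf{R(2)} and by $(\sigma_2,u_2)\in R_n$ via \textsf{R(5)}, and aim to derive an immediate contradiction without invoking any structural facts about \ranking.

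The key observation I would highlight is that in both rules the second coordinate $x$ of the produced good instance is identified as the partner of some $u^*$ in $\rho$. Specifically, in \textsf{R(2)} the node $x$ is defined to be the partner of $u_1^*$ in $\rho=(\sigma_1)_{u_1}^{i_1}$, while in \textsf{R(5)} the defining hypothesis is that $u_2^*$ is matched to $x$ in $\rho=(\sigma_2)_{u_2}^{i_2}$. Since $x$ has at most one partner in $M(\rho)$, these two statements force $u_1^*=u_2^*$, and because $u\mapsto u^*$ is an involution on $V$ we conclude $u_1=u_2$.

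Setting $u:=u_1=u_2$, I then exploit the conflicting matching statuses of $u$ itself in $\rho$: \textsf{R(2)} lives in the case where $u$ is \emph{unmatched} in $\sigma_u^{i_1}=\rho$, whereas \textsf{R(5)} lives in the case where $u$ is \emph{matched} in $\sigma_u^{i_2}=\rho$. This is a direct contradiction, completing the proof. Unlike the \textsf{R(6)} case of Lemma~\ref{lemma:rule-disjoint}, no appeal to Fact~\ref{fact:ranking-greedy} or Fact~\ref{fact:symmetric-difference} is needed, so I do not expect any genuine obstacle; the only subtlety worth double-checking is simply that the hypothesis $(\sigma_j,u_j)\in R_n$ together with the formula $\sigma_j=\rho_{u_j}^n$ is consistent with both branches and does not sneak in an extra case.
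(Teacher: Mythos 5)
Your proof is correct and follows essentially the same route as the paper: both arguments first derive $u_1 = u_2$ from the fact that the unique partner of $x$ in $M(\rho)$ must simultaneously be $u_1^*$ and $u_2^*$, and then extract a contradiction from the mutual exclusivity of the case conditions for $\textsf{R(2)}$ and $\textsf{R(5)}$. The paper packages that last step via Observation~\ref{obs:unique_u} (the same $(\sigma,u)$ cannot apply two distinct rules to produce one $(\rho,x)$), while you point directly at $u$'s matching status in $\rho$ being both ``unmatched'' and ``matched''; these are the same underlying observation.
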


\begin{proof}
Suppose the opposite is true:
$(\sigma_1, u_1)$ produces $(\rho,x)$ according to $\textsf{R(2)}$,
and $(\sigma_2, u_2)$ produces $(\rho,x)$ according to $\textsf{R(5)}$.
This implies that in $\rho$, $x$ is matched to both $u_1^*$ and $u_2^*$,
which means $u_1 = u_2$.
By Observation~\ref{obs:unique_u}, this means $\sigma_1 = \sigma_2$,
which contradicts the fact that
the same $(\sigma, u) \in R_n$ cannot use two different rules
to produce the same $(\rho,x) \in Q$.
\end{proof}

\begin{lemma}[Two in $\{\textsf{R(1)},\textsf{R(3)},\textsf{R(6)}\}$]\label{lemma:2-out-of-3}
Each $(\rho,x)\in Q$ cannot be produced from all three of $\textsf{R(1)}$,
$\textsf{R(3)}$ and $\textsf{R(6)}$.
\end{lemma}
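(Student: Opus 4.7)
The plan is to argue by contradiction. Suppose $(\rho,x)$ is produced by $(\sigma_1, u_1)$, $(\sigma_3, u_3)$, $(\sigma_6, u_6) \in R_n$ via $\textsf{R(1)}$, $\textsf{R(3)}$, $\textsf{R(6)}$ respectively. Then $u_1 = x^*$ and $u_3 = x$ are forced directly by the rule definitions. A short case check rules out $u_6 = x$ (otherwise $\sigma_6 = \sigma_3$, but $\textsf{R(6)}$ would need $x$ to have a partner in $\sigma_3$, which it does not, being unmatched) and $u_6 = x^*$ (otherwise $u_6^* = x$ would be simultaneously unmatched in $\rho$ by $\textsf{R(6)}$ and matched in $\rho$ by $\textsf{R(3)}$). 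In particular $u_6^* \notin \{x, x^*\}$ as well.

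The next step exploits the edge $\{x,u_6^*\}\in E$, which exists because $u_6^*$ is matched to $x$ in $\sigma_6$; together with the trivial $\{x,x^*\}\in E$, this gives two distinct neighbors of $x$ in $G$. Applying Fact~\ref{fact:ranking-greedy} to $\sigma_3=\rho_x^n$, in which $x$ sits at rank $n$ and is unmatched, forces both $x^*$ and $u_6^*$ to be matched in $\sigma_3$. Meanwhile, $\textsf{R(1)}$ and $\textsf{R(6)}$ guarantee that both $x^*$ and $u_6^*$ are unmatched in $\rho$.

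To finish, invoke Fact~\ref{fact:symmetric-difference} on the bad instance $(\sigma_3,x)$ and its good counterpart $(\rho,x) = ((\sigma_3)_x^{\rho(x)}, x)$. The fact describes $M(\sigma_3)\oplus M(\rho)$ as a single alternating path $P$ having $x$ as one endpoint, and constrains every node whose matching status differs in $\sigma_3$ versus $\rho$ to lie at an endpoint of $P$. By the previous paragraph, both $x^*$ and $u_6^*$ switch status; since $P$ has exactly two endpoints and $x$ is one of them, both must coincide with the other endpoint, giving $x^* = u_6^*$ and hence $u_6 = x$, contradicting the initial case analysis.

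The one point worth flagging is the essential role of $\textsf{R(1)}$: without it, the fact that $u_6^*$ alone changes matching status between $\sigma_3$ and $\rho$ would be harmlessly absorbed by letting $u_6^*$ serve as the second endpoint of $P$. It is precisely $\textsf{R(1)}$ that earmarks that single available endpoint for $x^*$, creating the two-into-one collision that drives the contradiction.
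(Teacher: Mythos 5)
Your proof is correct and follows essentially the same approach as the paper: establish that $u_6 \notin \{x, x^*\}$ (you argue this directly rather than via Observation~\ref{obs:unique_u}), note that $x^*$ and $u_6^*$ are distinct neighbors of $x$ both unmatched in $\rho$, and then combine Fact~\ref{fact:ranking-greedy} with Fact~\ref{fact:symmetric-difference} to derive a contradiction. The paper applies the symmetric-difference fact first, concluding one of the two must remain unmatched in $\sigma_3 = \rho_x^n$ and contradicting greediness, whereas you apply greediness first so that both nodes flip status and must collide at the single non-$x$ endpoint of the alternating path; these are mirror-image assemblies of the same two facts.
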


\begin{proof}
Assume the opposite is true.
Suppose $(\sigma_1,u_1)$ produces $(\rho,x)$ using $\textsf{R(1)}$; then, $x=u_1^*$ (hence, $x$ is
a neighbor of $u_1$ in $G$) and $u_1$ is unmatched in $\rho$.
Suppose $(\sigma_2,u_2)$ produces $(\rho,x)$ using $\textsf{R(3)}$; then,
 $x=u_2$ is unmatched in $\sigma_2$, and matched in $\rho$.
Suppose $(\sigma_3,u_3)$ produces $(\rho,x)$ using $\textsf{R(6)}$; then, $u_3$ is matched in $\rho$, $u_3^*$ is unmatched in $\rho$ and $x$ is a neighbor (in $G$) of $u_3^*$.

By Observation~\ref{obs:unique_u}, all of $u_1$, $u_2$ and $u_3$ are distinct.
In particular, observe that $u_1=x^*=u_2^* \neq u_3^*$; hence,
all of $u_1$, $u_2$ and $u_3^*$ are distinct (since $u_2$ is matched in $\rho$, but
the other two are not).

Now, suppose we start from $\sigma_2 = \rho_x^n$ and promote
$x = u_2$ resulting in $\rho$.  Observe that $u_2$ changes from
unmatched in $\sigma_2$ to matched in $\rho$,
and both $u_1$ and $u_3^*$ are unmatched in $\rho$.
By Fact~\ref{fact:symmetric-difference}, at least one of
$u_1$ and $u_3^*$ is unmatched in $\sigma_2$; however,
both $u_1$ and $u_3^*$ are neighbors of $x = u_2$ (in $G$), which is unmatched in $\sigma_2$.
This contradicts that fact that in any permutation, 
two unmatched nodes cannot be neighbors in $G$.
\end{proof}

We have finally finished the case analysis, and can conclude the
$f$ has injectivity at most 3, thereby achieving the boundary constraint.

%
%
%
%

\subsection{Lower Bound the Performance Ratio by LP Formulation}
\label{sec:finite_lp}

Combining all the proved constraints, the following $\LPn$ gives a lower bound
on the performance ratio when \ranking is run on a graph with $n$ nodes.
It is not surprising that the optimal value of $\LPn$ decreases as $n$ increases 
(although our proof does not rely on this).
In Section~\ref{sec:lowerbound}, we analyze the continuous relaxation $\LPi$
in order to give a lower bound for all finite $\LPn$, thereby proving a lower bound
on the performance ratio of \ranking.
\begin{align*}
\LPn \qquad\qquad \min \qquad \txts \frac{1}{n} \txts\sum_{t=1}^{n} x_t & \\
\text{s.t.} \qquad x_1 = 1, & \\
x_{t-1} - x_t \geq 0, & \qquad  t \in [2..n] \\
\txts\left(1-\frac{t-1}{n} \right) x_t + \frac{2}{n} \sum_{i=1}^{t-1} x_i \geq 1, & \qquad t \in [2..n] \\
x_n + \txts \frac{3}{2n} \sum_{t=1}^{n} x_t \geq 1, & \\
x_t \geq 0, & \qquad t \in [n].
\end{align*}

\section{Analyzing $\LPn$ via Continuous $\LPi$ Relaxation}\label{sec:lowerbound}

In this section, we analyze the limiting behavior of $\LPn$
by solving its continuous $\LPi$ relaxation,
which contains both monotone and boundary condition constraints.
We develop new duality and complementary slackness characterizations
to solve for the optimal value of $\LPi$, thereby giving a lower bound
on the performance ratio of \ranking.

%

\subsection{Continuous LP Relaxation}

To form a continuous linear program $\LPi$ from $\LPn$, we replace 
the variables $x_t$'s with 
a function variable $z$ that is differentiable almost everywhere in $[0,1]$. 
The dual $\LDi$ contains a real variable $\gamma$, and function variables
$w$ and $y$, where $y$ is differentiable 
almost everywhere in $[0,1]$.
In the rest of this paper, we use ``$\forall \theta$'' to denote ``for almost all $\theta$'', which means for all but a measure zero set.
 
It is not hard to see that $x_i$ corresponds to $z(\frac{i}{n})$, but
perhaps it is less obvious how $\LDi$ is formed.  We
remark that one could consider the limiting behavior
of the dual of $\LPn$ to conclude that $\LDi$ is the resulting program.
We show in Section~\ref{sec:pd_cont}  
that the pair $(\LPi, \LDi)$ is actually a special case of a more general class
of primal-dual continuous LP.
However, we first show in Lemma~\ref{lemma:lp_relax} that $\LPi$ is a relaxation of $\LPn$.

\vspace{-0.8cm}
\begin{minipage}[t]{0.5\textwidth}
\begin{align*}
& \qquad \LPi \\
\min \qquad & \txts \int_{0}^{1} z(\theta) d \theta \\
\text{s.t.}\qquad & z(0) = 1 \\
& z'(\theta) \leq 0, \, \forall \theta \in [0,1] \\
& \txts (1-\theta) z(\theta) + 2\int_{0}^{\theta} z(\lambda) d \lambda \geq 1, \, \forall \theta\in [0,1] \\
& \txts z(1) + \frac{3}{2} \int_{0}^{1} z(\theta) d \theta \geq 1 \\
& z(\theta) \geq 0, \, \forall \theta\in[0,1].
\end{align*}
\end{minipage}
\begin{minipage}[t]{0.5\textwidth}
\begin{align*}
& \qquad \LDi \\
\max \qquad & \txts \int_{0}^{1} w(\theta) d \theta + \gamma - y(0) \\
\text{s.t.} \qquad & \txts (1-\theta) w(\theta) + 2 \int_{\theta}^{1} w(\lambda) d \lambda \\
& \qquad + \txts \frac{3\gamma}{2} + y'(\theta) \leq 1, \, \forall \theta \in [0,1] \\
& \gamma - y(1) \leq 0 \\
& \gamma, y(\theta), w(\theta) \geq 0, \, \forall \theta \in [0,1]. \\
\end{align*}
\end{minipage}


\begin{lemma} [Continuous LP Relaxation] \label{lemma:lp_relax}
For every feasible solution $x$ in $\LPn$, there exists a feasible solution $z$ in $\LPi$ such that $\int_{0}^{1} z(\theta) d \theta = \frac{1}{n} \sum_{t=1}^{n} x_t$. In particular, the optimal value of $\LPn$ is at least the optimal value of $\LPi$.
\end{lemma}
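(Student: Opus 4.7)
The plan is to construct an explicit step function $z$ on $[0,1]$ from the feasible solution $x_1,\ldots,x_n$ of $\LPn$ and to verify that $z$ is feasible for $\LPi$ with matching objective value. Specifically, I would set $z(\theta) := x_t$ for $\theta \in [\frac{t-1}{n}, \frac{t}{n})$ and $z(1) := x_n$. This makes $z$ non-negative and piecewise constant, with $z(0) = x_1 = 1$, so the initial-value constraint is immediate.

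Since $x$ satisfies the monotone constraints $x_{t-1} \geq x_t$, the step function $z$ is non-increasing. Its derivative vanishes off the finite jump set $\{\frac{t}{n} : t \in [n-1]\}$, so $z'(\theta) = 0 \leq 0$ for almost all $\theta$, satisfying the monotone constraint in the almost-everywhere sense the continuous LP demands. The objective identity $\int_0^1 z(\theta)\,d\theta = \frac{1}{n}\sum_{t=1}^n x_t$ follows by direct computation, as does the boundary constraint $z(1) + \frac{3}{2}\int_0^1 z(\theta)\,d\theta = x_n + \frac{3}{2n}\sum_{i=1}^n x_i \geq 1$ from the $\LPn$ boundary constraint.

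The main step is verifying the continuum of evolving constraints. For $\theta \in [\frac{t-1}{n}, \frac{t}{n})$ with $t \geq 2$, splitting $\int_0^\theta z(\lambda)\,d\lambda$ over the constant pieces gives
\[
(1-\theta)\,z(\theta) + 2\int_0^\theta z(\lambda)\,d\lambda = \left(1 + \theta - \tfrac{2(t-1)}{n}\right) x_t + \tfrac{2}{n} \sum_{i=1}^{t-1} x_i.
\]
Since $\theta \geq \frac{t-1}{n}$ and $x_t \geq 0$, the coefficient of $x_t$ is at least $1 - \frac{t-1}{n}$, so the right-hand side is bounded below by $(1 - \frac{t-1}{n}) x_t + \frac{2}{n}\sum_{i=1}^{t-1} x_i$, which is $\geq 1$ by the discrete evolving constraint at rank $t$. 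The easy cases $t = 1$ (where $z(\theta) = 1$ so the expression equals $1+\theta$) and $\theta = 1$ are handled by direct substitution; at $\theta = 1$ the inequality reduces to $\frac{2}{n}\sum_i x_i \geq 1$, itself a rearrangement of the $\LPn$ evolving constraint at $t = n$.

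I anticipate no genuine obstacle: the argument is largely bookkeeping once the step-function choice is made. The one subtlety worth flagging is the interpretation of the monotone constraint — although $z$ fails to be differentiable at the jump points, the constraint is posed only for almost all $\theta$, so piecewise-constant $z$ is admissible. Once all five primal constraints and the objective value are matched, the ``in particular'' claim follows immediately, since an optimal solution of $\LPn$ yields a feasible solution of $\LPi$ with equal objective value.
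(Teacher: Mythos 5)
Your proposal is correct and takes essentially the same approach as the paper: both construct a piecewise-constant step function from the discrete solution and verify each constraint of $\LPi$ by direct computation, with the key step being the observation that $\theta \geq \frac{t-1}{n}$ makes the coefficient of $x_t$ at least $1 - \frac{t-1}{n}$. The only difference is that you use right-open intervals $[\frac{t-1}{n}, \frac{t}{n})$ while the paper uses left-open intervals $(\frac{t-1}{n}, \frac{t}{n}]$, which merely shifts the endpoint cases you have to treat separately ($\theta=1$ for you, $\theta=0$ for the paper).
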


\begin{proof}
Suppose $x$ is a feasible solution to $\LPn$. Define a step function $z$ in interval $[0,1]$ as follows: $z(0) := 1$ and $z(\theta) := x_t$ for $\theta \in \left(\frac{t-1}{n}, \frac{t}{n}\right]$ and $t\in [n]$. It follows that
\begin{align*}
\txts\int_{0}^{1} z(\theta) d \theta
= \sum_{t=1}^{n} \int_{\frac{t-1}{n}}^{\frac{t}{n}} z(\theta)  d\theta
= \frac{1}{n} \sum_{t=1}^{n} x_t.
\end{align*}
We now prove that $z$ is feasible in $\LPi$. Clearly $z(0) = 1$ and $z'(\theta) = 0$ for $\theta\in [0,1]\setminus \{\frac{t}{n}: 0\leq t\leq n, t\in \Z \}$. For every $\theta \in (0,1]$, suppose $\theta \in \left(\frac{t-1}{n}, \frac{t}{n}\right]$, and we have
\begin{align*}
\txts (1-\theta) z(\theta) + 2\int_{0}^{\theta} z(\lambda) d \lambda
& = \txts (1-\theta) x_t + 2 \sum_{i=1}^{t-1} \int_{\frac{i-1}{n}}^{\frac{i}{n}} z(\theta) d \theta
+ 2 \int_{\frac{t-1}{n}}^{\theta} z(\theta) d \theta \\
& = \txts (1-\theta) x_t + \frac{2}{n} \sum_{i=1}^{t-1} x_i + 2 \left(\theta - \frac{t-1}{n}\right) x_t \\
& = \txts (1 - \frac{t-1}{n} + ( \theta - \frac{t-1}{n})) x_t + \frac{2}{n} \sum_{i=1}^{t-1} x_i  \\
& \geq \txts \left(1-\frac{t-1}{n} \right) x_t + \frac{2}{n} \sum_{i=1}^{t-1} x_i \\
& \geq 1,
\end{align*}
where the last inequality follows from the feasibility of $x$ in $\LPn$. The above inequality holds trivially at $\theta = 0$. For the last constraint, using the fact that $\int_{0}^{1} z(\theta) d \theta = \frac{1}{n} \sum_{t=1}^{n} x_t$ we have
\begin{align*}
\txts z(1) + \frac{3}{2} \int_{0}^{1} z(\theta) d \theta = x_n + \frac{3}{2n} \sum_{t=1}^{n} x_t \geq 1,
\end{align*}
where the last inequality follows from the feasibility of $x$ in $\LPn$.
\end{proof}

\subsection{Primal-Dual for a General Class of Continuous LP}
\label{sec:pd_cont}

We study a class of continuous linear program $\CP$ that 
includes $\LPi$ as a special case. In particular, $\CP$ contains 
monotone and boundary conditions as constraints. Let $K, L>0$ be two real constants. 
Let $A$, $B$, $C$, $F$ be measurable functions on $[0,1]$. 
Let $D$ be a non-negative measurable function on $[0,1]^2$. 
We describe $\CP$ and its dual $\CD$, following which we present 
weak duality and complementary slackness conditions. In $\CP$, the variable is 
a function $z$ that is differentiable almost everywhere in $[0,1]$; in $\CD$, the variables are
a real number $\gamma$, and
 measurable functions $w$ and $y$, where $y$ is differentiable almost 
everywhere in $[0,1]$.

\vspace{-0.8cm}
\begin{minipage}[t]{0.425\textwidth}
\begin{align}
& \qquad \CP \nonumber \\
\min \qquad & \txts p(z) = \int_{0}^{1} A(\theta) z(\theta) d \theta \nonumber \\
\text{s.t.} \qquad & z(0) = K \label{eq:pcon0} \\
& z'(\theta) \leq 0, \, \forall \theta \in [0,1] \label{eq:pcon1} \\
& \txts B(\theta) z(\theta) + \int_{0}^{\theta} D(\theta, \lambda) z(\lambda) d \lambda \nonumber \\
& \qquad \geq C(\theta), \, \forall \theta \in [0,1] \label{eq:pcon2} \\
& \txts z(1) + \int_{0}^{1} F(\theta) z(\theta) d \theta \geq L \label{eq:pcon3} \\
& z(\theta) \geq 0, \, \forall \theta \in [0,1]. \nonumber
\end{align}
\end{minipage}
\begin{minipage}[t]{0.575\textwidth}
\begin{align}
& \qquad \CD \nonumber \\
\max \qquad & \txts d(w, y, \gamma) = \int_{0}^{1} C(\theta) w(\theta) d \theta + L \gamma - K y(0) \nonumber \\
\text{s.t.} \qquad & \txts B(\theta) w(\theta) + \int_{\theta}^{1} D(\lambda, \theta) w(\lambda) d \lambda \nonumber \\
& \qquad + F(\theta) \gamma + y'(\theta) \leq A(\theta), \, \forall \theta \in [0,1] \label{eq:dcon1} \\
& \gamma - y(1) \leq 0 \label{eq:dcon2} \\
& \gamma, y(\theta), w(\theta) \geq 0, \, \forall \theta \in [0,1]. \nonumber
\end{align}
\end{minipage}

\begin{lemma} [Weak Duality and Complementary Slackness] \label{lemma:duality}
Suppose $z$ and $(w,y,\gamma)$ are feasible solutions to $\CP$ and $\CD$ respectively. Then, $d(w,y,\gamma) \leq p(z)$. Moreover, suppose $z$ and $(w,y,\gamma)$ satisfy the following complementary slackness conditions:
\begin{align}
z'(\theta) y(\theta) & = 0, \qquad \forall \theta \in [0, 1] \label{eq:cs1} \\
\txts \left[B(\theta) z(\theta) + \int_{0}^{\theta} D(\theta, \lambda) z(\lambda) d \lambda - C(\theta)\right] w(\theta)&  = 0, \qquad \forall \theta \in [0,1] \label{eq:cs2} \\
\txts \left[ z(1) + \int_{0}^{1} F(\theta) z(\theta) d \theta - L \right] \gamma & = 0 \label{eq:cs3} \\
\txts \left[ B(\theta) w(\theta) + \int_{\theta}^{1} D(\lambda, \theta) w(\lambda) d \lambda + F(\theta) \gamma + y'(\theta) - A(\theta) \right] z(\theta) & = 0, \qquad \forall \theta \in [0,1] \label{eq:cs4} \\
(\gamma-y(1)) z(1) & = 0. \label{eq:cs5}
\end{align}
Then, $z$ and $(w,y,\gamma)$ are optimal to $\CP$ and $\CD$, respectively, and achieve
the same optimal value.
\end{lemma}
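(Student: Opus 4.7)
The plan is to prove weak duality via the standard primal-dual ``multiply-and-integrate'' technique, with two additional ingredients needed for the continuous setting: Fubini's theorem to handle the double integral involving the kernel $D$, and integration by parts to peel off the $y'(\theta)$ appearing in the dual constraint \eqref{eq:dcon1}.

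Concretely, I would multiply constraint \eqref{eq:dcon1} by $z(\theta)\geq 0$ and integrate over $[0,1]$ to get an upper bound of $p(z)$. Fubini (valid since $D,z,w\geq 0$) recasts the resulting double integral $\int_{0}^{1} z(\theta) \int_{\theta}^{1} D(\lambda,\theta) w(\lambda)\, d\lambda\, d\theta$ as $\int_{0}^{1} w(\lambda)\int_{0}^{\lambda} D(\lambda,\theta) z(\theta)\, d\theta\, d\lambda$; combining this with the $B(\theta) z(\theta) w(\theta)$ term and applying \eqref{eq:pcon2} weighted by $w(\theta)\geq 0$ lower-bounds the sum by $\int_0^1 C(\theta) w(\theta)\, d\theta$. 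A similar weighting of \eqref{eq:pcon3} by $\gamma\geq 0$ lower-bounds $\gamma\int_0^1 F(\theta) z(\theta)\, d\theta$ by $L\gamma-\gamma z(1)$. Finally, integration by parts rewrites $\int_{0}^{1} z(\theta) y'(\theta)\, d\theta$ as $z(1) y(1)-K y(0)-\int_{0}^{1} z'(\theta) y(\theta)\, d\theta$, where the last integral is non-positive by \eqref{eq:pcon1} and the non-negativity of $y$. Collecting everything, we obtain
\[
d(w,y,\gamma) + z(1)\bigl(y(1)-\gamma\bigr) + \int_{0}^{1} \bigl(-z'(\theta)\bigr) y(\theta)\, d\theta \;\leq\; p(z),
\]
and both extra terms on the left are non-negative (the first by \eqref{eq:dcon2} combined with $z(1)\geq 0$), yielding $d(w,y,\gamma)\leq p(z)$.

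For the complementary slackness direction, the observation is that each of \eqref{eq:cs1}--\eqref{eq:cs5} tightens exactly one inequality used above: \eqref{eq:cs4} makes the initial dual-times-$z$ step an equality, \eqref{eq:cs2} does the same for the $w$-weighted primal inequality, \eqref{eq:cs3} for the $\gamma$-weighted one, \eqref{eq:cs1} forces $\int_0^1 z'(\theta) y(\theta)\, d\theta=0$, and \eqref{eq:cs5} forces $z(1)(y(1)-\gamma)=0$. Thus under CS the entire chain collapses to equality, giving $p(z)=d(w,y,\gamma)$; combined with weak duality this certifies optimality of both solutions.

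The main obstacle I anticipate is technical rather than conceptual. Making the integration-by-parts step rigorous requires $y$ (and hence the product $zy$) to be absolutely continuous on $[0,1]$, which is slightly stronger than differentiability almost everywhere; similarly, the Fubini swap requires $(\theta,\lambda)\mapsto D(\lambda,\theta) w(\lambda) z(\theta)$ to be jointly measurable and integrable. These are mild hypotheses but should be stated (explicitly or implicitly) among the regularity assumptions on $z$, $y$, $w$, and $D$. Once granted, the remaining argument is purely a bookkeeping exercise of chaining signed terms.
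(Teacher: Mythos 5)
Your proposal is correct and uses essentially the same argument as the paper: the same Tonelli/Fubini swap, the same integration by parts to peel off $y'(\theta)z(\theta)$, and the same pairing of each complementary slackness condition with the corresponding inequality. The only cosmetic difference is the direction of the chain --- you start from $p(z)$ by multiplying the dual constraint by $z$, while the paper starts from $d(w,y,\gamma)$ and works forward --- and your closing remark about the regularity needed for Fubini and integration by parts is a fair observation that the paper leaves implicit in its ``differentiable almost everywhere'' and ``measurable'' hypotheses.
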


\begin{proof}
Using the primal and dual constraints, we obtain
\begin{flalign*}
d(w, y, \gamma) & = \txts \int_{0}^{1} C(\theta) w(\theta) d \theta + L \gamma - K y(0) \\
& \leq \txts \int_{0}^{1} \left[B(\theta) z(\theta) + \int_{0}^{\theta} D(\theta, \lambda) z(\lambda) d \lambda \right] w(\theta) d \theta + L \gamma - K y(0) & \text{by~\eqref{eq:pcon2}} \\
& = \txts \int_{0}^{1} \left[ B(\theta) w(\theta) + \int_{\theta}^{1} D(\lambda, \theta) w(\lambda) d \lambda \right] z(\theta) d\theta + L \gamma - K y(0) &  \text{(*)}\\
& \leq \txts \int_{0}^{1} \left[ A(\theta) - F(\theta) \gamma - y'(\theta) \right] z(\theta) d\theta + L \gamma - K y(0) & \text{by~\eqref{eq:dcon1}} \\
& = \txts \int_{0}^{1} A(\theta) z(\theta) d\theta - \int_{0}^{1} y'(\theta) z(\theta) d\theta
+ \left[L-\int_{0}^{1} F(\theta) z(\theta) d\theta \right] \gamma - K y(0) \\
& \leq \txts \int_{0}^{1} A(\theta) z(\theta) d\theta - \int_{0}^{1} y'(\theta) z(\theta) d\theta
+ z(1) \gamma - K y(0) & \text{by~\eqref{eq:pcon3}} \\
& = \txts \int_{0}^{1} A(\theta) z(\theta) d\theta - y(1)z(1) + y(0)z(0) + \int_{0}^{1} z'(\theta) y(\theta) d\theta
+ z(1) \gamma - K y(0) & \text{(**)} \\
& \leq \txts \int_{0}^{1} A(\theta) z(\theta) d\theta + (\gamma-y(1)) z(1) & \text{by~\eqref{eq:pcon0},~\eqref{eq:pcon1}} \\
& \leq \txts \int_{0}^{1} A(\theta) z(\theta) d\theta & \text{by~\eqref{eq:dcon2}} \\
& = p(z),
\end{flalign*}

where in (*) we change the order of integration by using
Tonelli's Theorem on non-negative measurable function $g$:
$\int_0^1 \int_0^\theta g(\theta, \lambda) d \lambda d \theta = \int_0^1 \int_\theta^1 g(\lambda, \theta) d \lambda d \theta$; and in (**) we use integration by parts.
Moreover, if $z$ and $(w,y,\gamma)$ satisfy conditions~\eqref{eq:cs1} --~\eqref{eq:cs5}, then all the inequalities above hold with equality. Hence, $d(w, y, \gamma) = p(z)$; so $z$ and $(w,y,\gamma)$ are optimal to $\CP$ and $\CD$, respectively.
\end{proof}


\subsection{Lower Bound for the Performance Ratio}
The performance ratio of \ranking is lower bounded by the optimal value of $\LPi$. We analyze this optimal value by applying the primal-dual method to $\LPi$. In particular, we construct a primal feasible solution $z$ and a dual feasible solution $(w,y,\gamma)$ that satisfy the complementary slackness conditions presented in Lemma~\ref{lemma:duality}.
Note that $\LPi$ and $\LDi$ are achieved from $\CP$ and $\CD$ by setting $K := 1$, $L := 1$, $A(\theta) := 1$, $B(\theta) := 1-\theta$, $C(\theta) := 1$,
$D(\theta) := 2$, $F(\theta) := \frac{3}{2}$.

We give some intuition on how $z$ is constructed. An optimal solution to $\LPi$ should satisfy the primal constraints with equality for some $\theta$. Setting the constraint $(1-\theta) z(\theta) + 2\int_{0}^{\theta} z(\lambda) d \lambda \geq 1$ to equality we get $z(\theta) = 1 - \theta$. However this function violates the last constraint $z(1) + \frac{3}{2} \int_{0}^{1} z(\theta) d \theta \geq 1$. Since $z$ is decreasing, we need to balance between $z(1)$ and $\int_{0}^{1} z(\theta) d\theta$. 

The intuition is that we set $z(\theta) := 1-\theta$ for $\theta \in [0, \mu]$
and allow $z$ to decrease until $\theta$ reaches some value $\mu \in (0,1)$,
and then $z(\theta) := 1-\mu$ stays constant for $\theta\in[\mu,1]$. To determine the value of $\mu$, note that the equation $z(1) + \frac{3}{2} \int_{0}^{1} z(\theta) d \theta = 1$ should be satisfied, since otherwise we could construct a feasible solution with smaller objective value by decreasing the value of $z(\theta)$ for $\theta\in (\mu,1]$. It follows that $(1- \mu) + \frac{3}{2} \left(1-\mu+\frac{\mu^2}{2} \right) = 1$, that is, the value of $\mu \in (0,1)$ is determined by the equation $3 \mu^2 - 10 \mu + 6 = 0$.

After setting $z$, we construct $(w,y,\gamma)$ carefully to fit the complementary slackness conditions. Formally, we set $z$ and $(w,y,\gamma)$ as follows with their graphs on the right hand side:

\begin{minipage}{0.6\textwidth}
\vspace{-30pt}
\begin{align*}
z(\theta) & =
\begin{cases}
1 - \theta, & 0 \leq \theta \leq \mu \\
1 - \mu, & \mu < \theta \leq 1
\end{cases} \\
w(\theta) & =
\begin{cases}
\frac{2(1-\mu)^2}{(5-3\mu)(1-\theta)^{3}}, & 0 \leq \theta \leq \mu \\
0, & \mu < \theta \leq 1
\end{cases} \\
y(\theta) & =
\begin{cases}
0, & 0 \leq \theta \leq \mu \\
\frac{2(\theta-\mu)}{5-3\mu}, & \mu < \theta \leq 1
\end{cases} \\
\gamma & = \txts \frac{2(1-\mu)}{5-3\mu},
\end{align*}
where $\mu = \frac{5-\sqrt{7}}{3}$ is a root of the equation
\begin{align*}
3 \mu^2 - 10 \mu + 6 = 0.
\end{align*}
\end{minipage}
\begin{minipage}{0.4\textwidth}
\centering
\vspace{-10pt}
\includegraphics[width=0.9\textwidth]{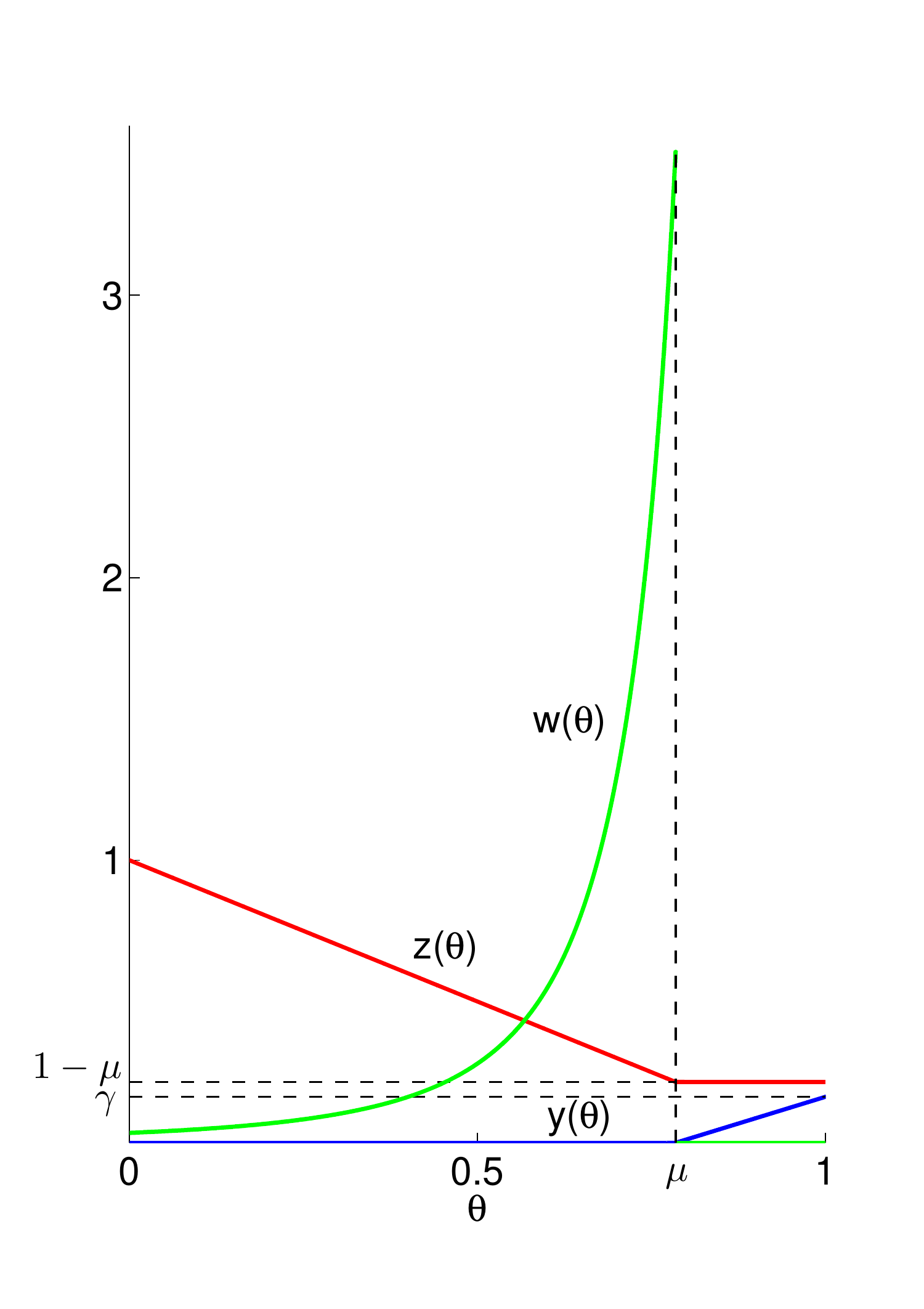}
\vspace{-10pt}
\captionof{figure}{Optimal $z$ and $(w,y,\gamma)$}
\vspace{20pt}
\label{fig:zwy}
\end{minipage}

%
%
%
%



\begin{lemma} [Optimality of $z$ and $(w,y,\gamma)$] \label{lemma:opt}
The solutions $z$ and $(w,y,\gamma)$ constructed above are optimal to $\LPi$ and $\LDi$, respectively. In particular, the optimal value of $\LPi$ is $\frac{2(5-\sqrt{7})}{9}\approx 0.523$.
\end{lemma}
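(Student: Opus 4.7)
The plan is to invoke Lemma~\ref{lemma:duality} directly: I will verify that the proposed $z$ is primal feasible, that $(w,y,\gamma)$ is dual feasible, and that all five complementary slackness conditions~\eqref{eq:cs1}--\eqref{eq:cs5} hold. Once all of this is checked, optimality is automatic, and the objective value $\int_0^1 z(\theta)\,d\theta$ can be computed and simplified using the defining quadratic $3\mu^2-10\mu+6=0$.

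For primal feasibility, the boundary $z(0)=1$, monotonicity $z'(\theta)\leq 0$, and non-negativity are immediate from the piecewise formula. The integral constraint $(1-\theta)z(\theta)+2\int_0^\theta z(\lambda)\,d\lambda\geq 1$ is actually an equality on $[0,\mu]$ (since $z(\theta)=1-\theta$ makes the left side equal $(1-\theta)^2+2(\theta-\theta^2/2)=1$), and on $(\mu,1]$ the left side becomes $(1-\theta)(1-\mu)+2[\mu-\mu^2/2+(1-\mu)(\theta-\mu)]$, which I will show is $\geq 1$ by substituting $3\mu^2=10\mu-6$. Finally, plugging $z$ into $z(1)+\tfrac{3}{2}\int_0^1 z$ yields $(10-10\mu+3\mu^2)/4 = 4/4 = 1$ by the defining quadratic, so the boundary constraint holds with equality.

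For dual feasibility, the key computation is the integral in the first constraint. On $[0,\mu]$ one gets $\int_\theta^1 w(\lambda)\,d\lambda = \int_\theta^\mu \tfrac{2(1-\mu)^2}{(5-3\mu)(1-\lambda)^3}\,d\lambda = \tfrac{1}{5-3\mu} - \tfrac{(1-\mu)^2}{(5-3\mu)(1-\theta)^2}$, and combining with $(1-\theta)w(\theta)$ the $(1-\theta)^{-2}$ terms cancel, leaving $(1-\theta)w(\theta)+2\int_\theta^1 w = \tfrac{2}{5-3\mu}$; adding $\tfrac{3\gamma}{2}=\tfrac{3(1-\mu)}{5-3\mu}$ and $y'(\theta)=0$ gives exactly $1$. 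On $(\mu,1]$, $w\equiv 0$ so the left side is just $\tfrac{3\gamma}{2}+y'(\theta)= \tfrac{3(1-\mu)}{5-3\mu}+\tfrac{2}{5-3\mu}=1$. The constraint $\gamma\leq y(1)$ holds with equality by direct computation. Non-negativity of $w$, $y$, $\gamma$ is clear since $\mu<1$.

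For complementary slackness, all five identities drop out of the construction almost by design: condition~\eqref{eq:cs1} holds because $y\equiv 0$ on $[0,\mu]$ and $z'\equiv 0$ on $(\mu,1]$; condition~\eqref{eq:cs2} holds because the primal integral constraint is tight on $[0,\mu]$ (where $w$ is supported) and $w$ vanishes on $(\mu,1]$; condition~\eqref{eq:cs3} holds because we verified the boundary constraint with equality; condition~\eqref{eq:cs4} holds because we verified the dual constraint with equality for almost every $\theta$; and condition~\eqref{eq:cs5} holds because $\gamma=y(1)$. The main routine but careful step is the integral computation for the dual constraint on $[0,\mu]$ and the use of $3\mu^2=10\mu-6$ at two places; nothing deeper than algebra is required. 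Finally, evaluating the objective gives $\int_0^1 z = \mu-\tfrac{\mu^2}{2}+(1-\mu)^2 = 1-\mu+\tfrac{\mu^2}{2} = \tfrac{2-2\mu+\mu^2}{2}$, and substituting $\mu^2=\tfrac{10\mu-6}{3}$ simplifies this to $\tfrac{2\mu}{3}=\tfrac{2(5-\sqrt{7})}{9}$, completing the proof.
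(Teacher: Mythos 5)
Your proposal is correct and follows the same route as the paper: construct the claimed optimal pair, verify the complementary slackness conditions of Lemma~\ref{lemma:duality}, and read off the common objective value using $3\mu^2 - 10\mu + 6 = 0$. You are in fact slightly more careful than the paper's own write-up in explicitly checking primal and dual feasibility (the paper leaves some of the feasibility on $(\mu,1]$ implicit), though one small remark: the primal constraint on $(\mu,1]$ does not actually require the quadratic — the left side equals $(1-\mu)(1+\theta)+\mu^2$, which is $1 + (\theta-\mu)(1-\mu) \geq 1$ for $\theta\geq\mu$ by inspection.
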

\begin{proof}
We list the complementary slackness conditions and check that they are satisfied by $z$ and $(w,y,\gamma)$. Then Lemma~\ref{lemma:duality} gives the optimality of $z$ and $(w,y,\gamma)$.
\begin{enumerate}
\item[\eqref{eq:cs1}]
$z'(\theta) y(\theta)=0$: we have $y(\theta)=0$ for $\theta\in[0,\mu)$ and $z'(\theta)=0$ for $\theta\in(\mu,1]$.

\item[\eqref{eq:cs2}]
$\left[(1-\theta) z(\theta) + 2\int_{0}^{\theta} z(\lambda) d \lambda - 1 \right] w(\theta) = 0$: we have $(1-\theta) z(\theta) + 2\int_{0}^{\theta} z(\lambda) d \lambda - 1 = (1-\theta)^2 + 2 (\theta-\frac{\theta^2}{2}) - 1 = 0$ for $\theta\in[0,\mu)$ and $w(\theta)=0$ for $\theta\in(\mu,1]$.

\item[\eqref{eq:cs3}]
$\left[ z(1) + \frac{3}{2} \int_{0}^{1} z(\theta) d \theta - 1 \right] \gamma = 0$: we have $z(1) + \frac{3}{2} \int_{0}^{1} z(\theta) d \theta - 1 = (1-\mu) + \frac{3}{2} \left(1-\mu+\frac{\mu^2}{2} \right) - 1 = 0$ by the definition of $\mu$.

\item[\eqref{eq:cs4}]
$\left[ (1-\theta) w(\theta) + 2 \int_{\theta}^{1} w(\lambda) d \lambda + \frac{3\gamma}{2} + y'(\theta) - 1\right] z(\theta) = 0$: for $\theta\in[0,\mu)$, we have
\begin{align*}
\txts (1-\theta) w(\theta) + 2 \int_{\theta}^{1} w(\lambda) d \lambda + \frac{3\gamma}{2} + y'(\theta) - 1
= \frac{2(1-\mu)^2}{(5-3\mu)(1-\theta)^{2}} + 2 \int_{\theta}^{\mu} w(\lambda) d \lambda + \frac{3(1-\mu)}{5-3\mu} + 0 - 1 = 0,
\end{align*}
and for $\theta\in(\mu,1]$, we have
\begin{align*}
\txts (1-\theta) w(\theta) + 2 \int_{\theta}^{1} w(\lambda) d \lambda + \frac{3\gamma}{2} + y'(\theta) - 1
= \frac{3\gamma}{2} + y'(\theta) - 1
= \frac{3(1-\mu)}{5-3\mu} + \frac{2}{5-3\mu} - 1 = 0.
\end{align*}

\item[\eqref{eq:cs5}]
$(\gamma - y(1)) z(1)=0$: we have $\gamma-y(1) = \frac{2(1-\mu)}{5-3\mu} - \frac{2(1-\mu)}{5-3\mu} = 0$.
\end{enumerate}
Moreover, the optimal value of $\LPi$ is $\int_{0}^{1} z(\theta) d\theta =1 - \mu + \frac{\mu^2}{2} = \frac{2(5-\sqrt{7})}{9}\approx 0.523$.
\end{proof}

\begin{proofof}{Theorem~\ref{th:ratio}}
The expected ratio of \ranking is lower bounded by the optimal value of $\LPn$. Hence, the theorem follows from Lemmas~\ref{lemma:lp_relax} and~\ref{lemma:opt}.
\end{proofof}


\ignore{
By Lemma~\ref{lemma:bad-marginalbad}, we know that $\sum_{i=1}^t x_i=t-\sum_{i=1}^t (t-i+1)\alpha_i$. Hence by Lemma~\ref{lemma:1-to-n-t+1}, we have
\begin{equation*}
\sum_{i=1}^{t-1} x_i\geq \sum_{i=1}^t ((n-t)+(t-i+1))\alpha_i=(n-t)(1-x_t)+t-\sum_{i=1}^t x_i.
\end{equation*}
Rearranging the above inequality, we obtain
\begin{equation} \label{eq:xtlowerbound}
x_t\geq \frac{n-2\sum_{i=1}^{t-1}x_i}{n-t+1}.
\end{equation}
Also, by Lemma~\ref{lemma:3-to-2n}, we have $1-x_n=\sum_{t=1}^n \alpha_t\leq \frac{3}{2n}\sum_{t=1}^n x_t$.

From the above results the performance ratio of \textsc{Ranking} is by the objective value $r$ of the following linear programming.
\begin{align}
\min \qquad & r = \frac{1}{n} \sum_{t=1}^{n} x_t \nonumber \\
\text{s.t.} \qquad & x_1 = 1 & \label{eq:x_1} \\
& x_{t-1}\geq x_{t}, & t\in [2,n] \label{eq:equ} \\
& x_t\geq \frac{n-2\sum_{i=1}^{t-1}x_i}{n-t+1}, & t\in [2,n] \label{eq:sum_t} \\
& 1-x_n \leq \frac{3}{2n} \sum_{t=1}^{n} x_t &\label{eq:sum_n} \\
& 0 \leq x_t \leq 1, & t\in[n]. \nonumber
\end{align}

\begin{lemma} \label{lemma:xtlowerbound}
For all $t\in[n-1]$, we have $\sum_{i=1}^{t} x_i \geq \frac{t(2n-t-1)}{2n}$.
\end{lemma}

\begin{proof}
We prove by induction on $t$. For $t=1$, by constraint~\eqref{eq:x_1} we know that $x_1=1\geq 1-\frac{1}{n}$ and hence the base case holds.

Suppose the statement holds for some $t\in[n-2]$, that is, $\sum_{i=1}^{t} x_i \geq \frac{t(2n-t-1)}{2n}$. By inequality~\eqref{eq:xtlowerbound} we have $(n-t) x_{t+1} + 2\sum_{i=1}^{t} x_i \geq n$. It follows that
\begin{align*}
(n-t) \sum_{i=1}^{t+1} x_i = (n-t) x_{t+1} + 2\sum_{i=1}^{t} x_i + (n-t-2)\sum_{i=1}^{t} x_i
\geq n + (n-t-2) \cdot \frac{t(2n-t-1)}{2n}.
\end{align*}
To finish the inductive step, it suffices to show $2n^2 + t(n-t-2)(2n-t-1) \geq (n-t)(t+1)(2n-t-2)$. It can be checked that
\begin{align*}
& 2n^2 + t(n-t-2)(2n-t-1) - (n-t)(t+1)(2n-t-2) \\
= \,& 2n^2 + (nt-t^2-2t)(2n-t-2) - (nt-t^2 + n - t)(2n-t-2) + (nt-t^2-2t) \\
= \,& 2n^2 - (n+t)(2n-t-2) + nt-t^2-2t \\
= \,& 2n >0.
\end{align*}
\end{proof}

By constraint~\eqref{eq:sum_n}, we have $1-x_n \leq \frac{3}{2n}\sum_{t=1}^n x_t=\frac{3}{2}r$, where $r=\frac{\sum_{t=1}^n x_t}{n}$ is the expected fraction of matched nodes after running \textsc{Ranking}. Since $x_t$'s are decreasing, as $t$ increases, we know that for all $i\in[n]$, we have $x_t\geq 1-\frac{3}{2}r$.

\begin{theorem}\label{Th:ratio523}
Given any graph $G(V,E)$ with $n$ nodes that has a perfect matching, after running \textsc{Ranking} on $G$, the expected fraction of matched nodes $r$ is at least $\frac{2(5-\sqrt{7})}{9}>0.523$.
\end{theorem}

\begin{proof}
First observe that given graph $G$, we can make any number of copies of $G$ and put them together to form a giant graph which has infinity number of nodes. If we run \textsc{Ranking} on the giant graph, the expected fraction of matched nodes $r'$ is the same as running \textsc{Ranking} on $G$, which means that we can assume $n$ to be as large as possible.

Recall that $x_t \geq 1-\frac{3r}{2}$ for $t\in[n]$.
By Lemma~\ref{lemma:xtlowerbound} we have
\begin{align*}
r=\frac{1}{n}\sum_{t=1}^n x_t &\geq \frac{1}{n}\sum_{t=1}^{ \frac{3rn}{2}} x_t +\frac{1}{n}\sum_{t=\frac{3rn}{2}+1}^n \left(1-\frac{3r}{2}\right) \\
& \geq \frac{1}{n}\cdot \frac{3r(2n-3rn/2-1)}{4} - \left(1-\frac{3r}{2}\right)^2 \\
& = \frac{9}{8} r^2 - \left(\frac{3}{2} + \frac{3}{4n} \right) r + 1,
\end{align*}


which implies $9r^2-(20+\frac{6}{n})r+8\leq 0$. Solving the inequality yields $r\geq\frac{16}{20+6/n+\sqrt{112+240/n+36/n^2}}$, which tends to $\frac{16}{20+\sqrt{112}} = \frac{2(5-\sqrt{7})}{9}\approx 0.5231664>0.523$ as $n$ tends to infinity, as needed.
\end{proof}
} 


{
\bibliography{ref_s}
\bibliographystyle{plain}
}

\appendix
\ignore{
\section{Proof of Fact~\ref{fact:symmetric-difference}}

First, observe that running $\sigma$ on $G$ generating the same matching as running $\sigma$ on $G\backslash\{u\}$, since $u$ is unmatched. And running $\sigma$ on $G\backslash\{u\}$ is the same as running $\sigma_u^i$ on $G\backslash\{u\}$.
Hence it is sufficient to consider the following cases: running $\sigma_u^i$ on $G$ and running $\sigma_u^i$ on $G\backslash\{u\}$.

Second, observe the following statement is true for ranking:
At any round $t$, if there are three vertex $\{x,y,z\}$ such that $\{x,y\}$ are matched $\{x,z\}$ are adjacent and $z$ is unmatched, then $y$ ranks higher than $z$.

Define $M_t$ and $M_t'$ to be the matching in $G$ and $G\backslash\{u\}$ at round $t$.
Particularly, $M_0=M_0'=\emptyset$.
Define $P_t$ to be the symmetric difference between $M_t$ and $M_t'$.

Then we proof the following statement by induction on $t$:
For any $t$, one of the following happens:
\begin{enumerate}
\item
$P_t=\emptyset$
\item
$P_t=(u_1=u, u_2,\ldots, u_x)$ is a path.
$x\geq 2$.
$\sigma(u_{i+2})>\sigma(u_i)$ for all $1\leq i\leq x-2$.
\end{enumerate}

For the base case, $P_0=\emptyset$. The statement is true.

Suppose statement is true for $t=k$.
Now consider round $t=k+1$, when an edge $\{a,b\}$ is considered
There are two cases according to $P_t$.

\begin{enumerate}
\item $P_t=\emptyset$.

It is easy to see that the availablity of each vertex in $G$ and $G'$ is exactly the same except for $u$.

We can assume $P_{t+1}\neq P_t$, since otherwise we are done.
Then it must be the case $\{a,b\}\cap\{u\}\neq\emptyset$. We can assume $a=u$. Notice the availability of $b$ is the same and should be available in both $G$ and $G'$, since otherwise $P_{t+1}=P_t$. Then $G$ will match $\{u,b\}$ but $G'$ will not. Hence $P_{t+1}=(u_1=u,u_2=b)$.

Hence the statement is true.

\item
$P_t=(u_1=u,u_2,\ldots,u_x)$.
$x\geq 2$.
$\sigma(u_{i+2})>\sigma(u_i)$ for all $1\leq i\leq x-2$.

It is easy to see that the availablity of each vertex in $G$ and $G'$ is exactly the same except for $u_x$, which is available in $G$ or $G'$ but not both. (Notice $u$ is not in $G'$ and matched $G$, both are not available.)

We can assume $P_{t+1}\neq P_t$, since otherwise we are done.
Then it must be the case $\{a,b\}\cap\{u_x\}\neq\emptyset$. We assue $a=u_x$. Notice that $b$ is available in both $G$ and $G'$.
Then we will do the opposite decision in $G$ and $G'$, hence $(u_x,b)$ is included in $P_{t+1}=(u_1=u,u_2,\ldots,u_x,u_{x+1}=b)$.
The remaining thing is to check that $\sigma_u^i(b)>\sigma_u^i(u_{x-1})$. Notice that $\{u_x,u_{x-1}\}$ are matched at round $t$ in either $G$ or $G'$. Also $\{u_x,b\}$ are adjacent and $b$ is unmatched. It is clear from the observation that $b$ ranks after $u_{x-1}$.
\end{enumerate}

At the end, we argue that $P_i\neq \emptyset$, since $u$ is matched in $\sigma_u^i$. Also $l\geq 3$, since otherwise it contradicts Fact~\ref{fact:ranking-greedy}.
}

\ignore{
\section{Issues Concerning Previous Work} \label{appendix:issues}

\subsection{Issues with the Proof of Contrast Lemma in~\cite{Matthias12}}\label{appendix:contrast}

We refer to Poloczek and Szegedy's paper~\cite{Matthias12}, 
a version of it is available at:

\url{http://www.cs.bu.edu/faculty/gacs/courses/cs535/papers/greedy_matching.pdf}

On page 715, they define a mapping $\phi: S_1 \rightarrow S_2$
according to three rules, and wish to show that
the injectivity (using our terminology) is at most $k+1$,
in which at most 1 comes from Rule 1 and at most $k$ come 
from Rules 2 and 3.  
We cannot complete the analysis, and discuss the issues for the simple case $k=1$;
we write $x=x_1$ for convenience.
The original graph is $G$,
and $G'$ is the graph obtained by deleting node $x$. The set $S_1$ contains
schedules which when run on $G$, will match $u$ to a node other than $x$.  The set $S_2$ contains schedules (for $G$) that when run on $G'$
will match node $u$.

\noindent \textbf{(1) The mapping is incomplete for Rule 3.}

Considered the following the graph $G(V,E)$, with
$V=\{x,y',y,u\}$ and $E=\{\{x,y'\},\{y',y\},\{y,u\}\}$. Consider the
schedule $\sigma$ with permutations $\pi_O=\pi_x=\pi_{y'}=\pi_y=\pi_u=(x,y',y,u)$.

Rule 3 is applicable, because $u$ is chosen 
as a second endpoint when $\sigma$ is run on $G$,
and $u$ is unmatched when $\sigma$ is run on $G'$.
According to Rule 3, we should
switch $u$ with $y'$ in $\pi_y$; after the switch $\pi_O$ stays unchanged. It is clear that $u$ is unmatched in $G'$ using $\sigma'$, since $y'$ will match $y$ first.
Hence, we cannot map $\sigma$ to an element in $S_2$ using Rule 3.

\noindent \textbf{(2) The mapping is not injective even when only Rule 2 is applied.}

Now let's look at another issue (and counter-example). Suppose we consider the restricted mapping $\phi'$ to
the subset $S_1'$ of schedules for which Rule 2 is applicable, i.e.,
when $u$ is chosen as the first endpoint in $G$ and unmatched in $G'$.  According
to their argument, this mapping should be injective (for $k=1$).
We give a counter-example.

\begin{figure}[H]
\centering
\includegraphics[width=200pt]{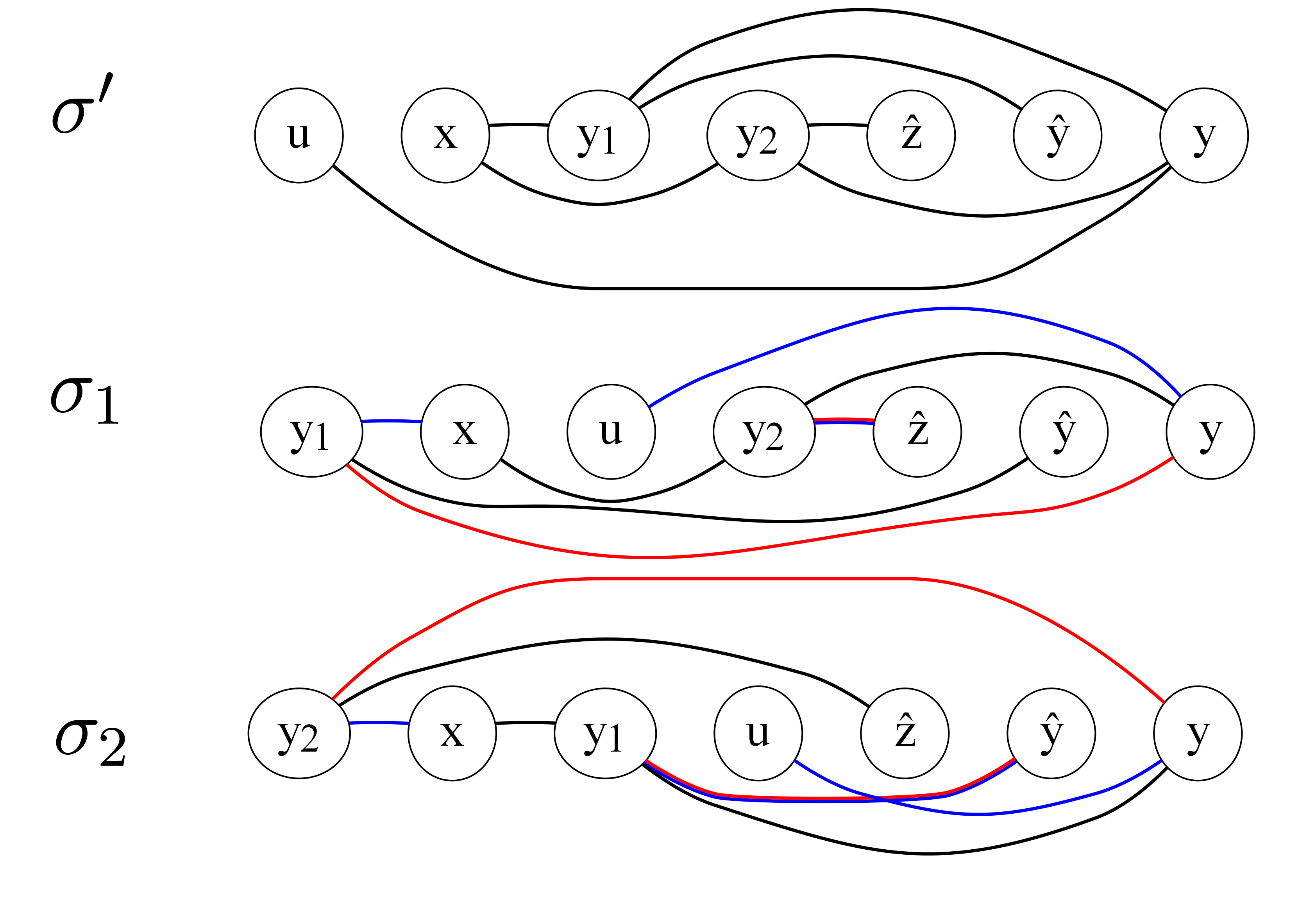}
\caption{Contrast Lemma Counter Example}
\label{pic256}
\end{figure}

Consider the graph in Figure~\ref{pic256}.  In each schedule,
the nodes are listed according to $\pi_O$, where nodes
on the left have higher priority. Each node $w$ has the
same $\pi_w$ in all three schedules. In particular,
$\pi_{y_1}=(x,y,\hat{y})$ and $\pi_{y_2}=(x,y,\hat{z})$.
We omit extraneous information for other vertices or permutations for clarity;
in particular, the perfect partner of $\hat{y}$ is not drawn,
has no neighbor other than $\hat{y}$, and has the lowest priority
in all permutations. The matchings in graph $G$ (for $\sigma_1$ and $\sigma_2$) are shown in blue, while
the matchings in graph $G'$ are shown in red.

In $\sigma' \in S_2$,
we wish to find the $y'$ in Rule 2 that generates $\sigma'$;
we claim that both $y_1$ and $y_2$ are candidates for $y'$.
Both schedules $\sigma_1$ and $\sigma_2$ are in $S_1$ and should be mapped via Rule 2.
It is easy to check that both schedules map to the same schedule $\sigma'$.
For $\sigma_1$, $u$ and $y_1$ are swapped in $\pi_O$ to produce $\sigma'$,
while for $\sigma_2$, $u$ and $y_2$ are swapped in $\pi_O$ to produce $\sigma'$.

We explain where the issue is in their argument.
In their proof they claim (at the second paragraph after Lemma 9) that if $\sigma$ maps to $\sigma'$ (via Rule 2),
then we can recover $y'$ from $\sigma'$, and hence recover $\sigma$.
They said that ``the gist of the proof is now, that we can identify $y'$ as the endpoint of one of the $P_i$'s as we enter the $\ell^{th}$ round'', where $\ell$ is the round when $u$ is chosen running schedule $\sigma'$.  However, in the example, $u$ is chosen at the very beginning,
and at this point, the alternating path is degenerate and just contains $x$.  Suppose we look at one more round and see what happens. Observe that in $\pi_x$, either $y_1$ or $y_2$ can have higher priority, and in both cases,
both $\sigma_1$ and $\sigma_2$ will still be mapped to $\sigma'$.
Hence, in the alternating path obtained by running $\sigma'$ on $G$ and $G'$,
the node following $x$ can be either $y_1$ or $y_2$, and so it is impossible to recover $y'$.


\subsection{Issues with ``Matching with our Eyes Closed''~\cite{Goel12}}

We refer to the paper by Goel and Tripathi~\cite{Goel12}, a version of which is available at:

\url{http://static.googleusercontent.com/external_content/untrusted_dlcp/research.google.com/en/us/pubs/archive/38289.pdf}

The paper also analyze the \ranking algorithm (which they refer to as \emph{Shuffle}).  We found two issues in this paper. First, we could not complete the
analysis of Lemma 7 that derives a constraint in their $LP(n)$.
Second, we could not reproduce the experimental results when we use
an LP solver to solve $LP(n)$.

\subsubsection{Issues with Proof of~\cite[Lemma 7]{Goel12}} \label{appendix:proof}

In the proof of~\cite[Lemma 7]{Goel12}, the goal is to establish a relation
between non-monotone events and some good events.
We found issues with Case 2, which attempts to construct
a relation between some subset of non-monotone events
(where $w$ and $w^*$ were not matched to each other in $Shuffle(\rho)$, or just in $\rho$ using our terminology)
and \emph{type-1 good events}.

The idea is that they start from some non-monotone event $(t,n, \rho)$,
where node $u$ with rank $n$ in permutation $\rho$ is unmatched, and when $u$ is promoted
to rank $t$ to form $\rho'$, its perfect partner $u^*$ becomes unmatched,
and $u$ is matched to $w \neq u^*$; for Case 2, $w$ and $w^*$ were not matched
to each other in $\rho$.  They wanted to show that
starting from $\rho'$, for each $s \in [t]$, if $w^*$ is moved to position $s$,
then in the resulting permutation $\rho''$ both $w$ and $w^*$ will be matched.  If this were the case, they could produce $t$ type-1 good events;
we give a counter-example for their Claim 2 in Figure~\ref{pic:560}
in which $w$ gets unmatched in $\rho''$.

\begin{figure}[H] \label{pic560}
\centering
\includegraphics[width=300pt]{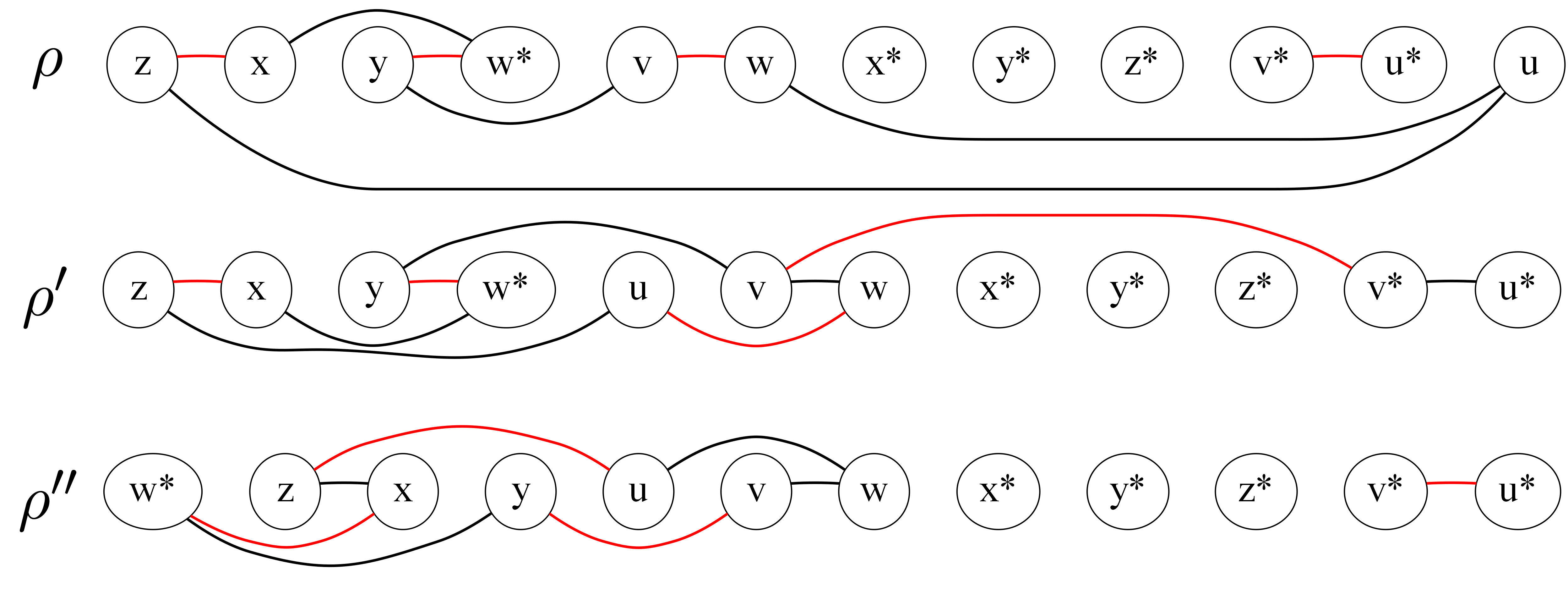}
\caption{A Counter Example for Claim 2 from the proof of Lemma 7 in~\cite{Goel12}}
\label{pic:560}
\vspace{20pt}
\end{figure}

The graph has 12 vertices, where edges between vertices and their perfect partners are omitted for clarity, except when they play a part in a matching.  In each permutation, the nodes are arranged where nodes on
the left have higher priority.  The red edges are matching produced in each case.  We set $\rho=(z, x, y, w^*, v, w, x^*, y^*, z^*, v^*, u^*, u)$ and $t = 5$; then, we move $w^*$ in $\rho'$ to position $s=1$.  We can see that $w$ becomes unmatched in $\rho''$.

In their original proof in the appendix, they argued that
the matchings produced by $Shuffle(\rho')$ and $Shuffle(\rho'')$
at the time when ``$w$ is being considered'' differ
by an augmenting path, and hence the number of options
available to $w$ can differ by at most 1.  The correct implication
is actually all but at most one neighbor of $w$ must be having the
same matching status in both $Shuffle(\rho')$ and $Shuffle(\rho'')$
when ``$w$ is considered''.  The confusion arises because
according to their description, a node is ``considered'' when it is acting
as the first endpoint and is active in choosing another endpoint.
Hence, observe that in both $\rho'$ and $\rho''$, when $w$ is considered,
all neighbors of $w$ are already matched; the difference is that in $\rho'$, $w$ is already matched (and hence skipped), while in $\rho''$, $w$ remains unmatched.

}

\section{Issues with the Experimental Results on $LP(n)$} \label{appendix:lp}

We ran experiments on the LP described in Section III.B of~\cite{Goel12} and obtained the following results. The source code (in MathProg format) is available at:

\url{http://i.cs.hku.hk/~algth/project/online_matching/issue.html}.


\begin{table}[h]
\centering
\begin{tabular}{|c|r|}
\hline
n = 20 & 0.5024 \\ \hline
n = 50 & 0.5010 \\ \hline
n = 100& 0.5005 \\ \hline
n = 200& 0.5003 \\ \hline
n = 300& 0.5002 \\ \hline
n = 400& 0.5001 \\ \hline
\end{tabular}
\caption{Our Experimental Results on $LP(n)$ in~\cite{Goel12}}
\end{table}

\vspace{0.5cm}
Hence, it is impossible to use $LP(n)$ to show that the performance ratio is
larger than 0.5002.

\ignore{
We could not reproduce the proof of Lemma 10 in~\cite{Goel12}, which stated that the optimal value of $LP(n)$ is monotonically increasing and was not supported by the above results. The proof idea is to construct a solution for $LP(k)$ from a solution for $LP(n)$, when k divides n. In their proof (in the full version of their paper), they divided the $x_s$'s into blocks of size $\frac{n}{k}$ and use the average as the new values. We think the argument to derive the inequalities generated by Lemma 7 in~\cite{Goel12} is problematic. They changed one of the coefficients from $n/t$ to $1/n$ without explanation. After the construction, the inequalities might be violated.
}

\section{Hardness Result} \label{sec:hardness}


\begin{figure}[h]
\centering
\includegraphics[width=200pt]{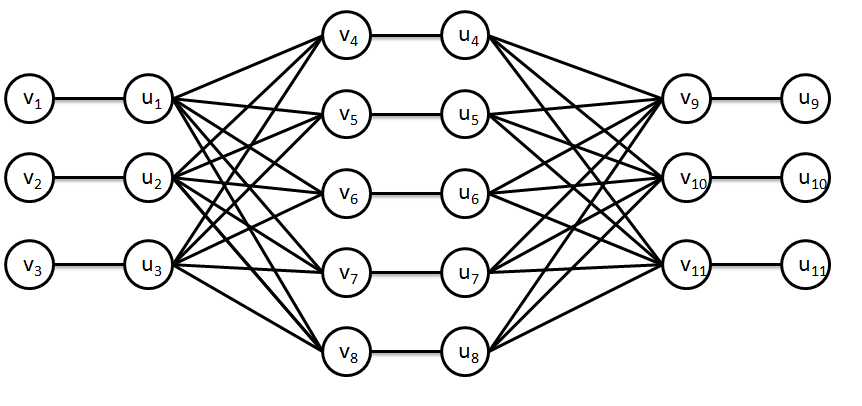}
\caption{Double Bomb Graph}
\label{picbomb}
\vspace{20pt}
\end{figure}
In this section, we show that we can slightly improve the hardness result in~\cite{Karande2011} by adjusting the parameter. An example of the graph is shown in \ref{picbomb}. We define the graph as follows:

Let $G$ be a bipartite graph over $2(3+\epsilon)n$ vertices ($u_i$'s and $v_i$'s). Define the edges by adjacency matrix $A$. ($A[i][j]=1$ if there is an edge between $u_i$ and $v_j$.)

\[
A[i][j] = \begin{cases}
1 & \text{if }i=j\\
1 & \text{if }i\in [1, n]\text{ and }j\in (n, (2+\epsilon)n]\\
1 & \text{if }i\in (n, (2+\epsilon)n]\text{ and }j\in ((2+\epsilon)n, (3+\epsilon)n]\\
0 & \text{otherwise}
\end{cases}
\]

We run experiments on different $n$'s and $\epsilon$'s and get the following result.

\begin{table}[h]
\centering
\begin{tabular}{|c|c|c|c|c|c|}

\hline
				& $n=20$ & $n=50$ & $n=100$ & $n=200$ & $n=500$\\ \hline
$\epsilon=0.63$& $0.7314$ & $0.7267$ & $0.7253$ & $0.7244$ & $0.7240$\\ \hline
\end{tabular}
\end{table}

We observe that when $\epsilon=1-1/e$ the ratio is minimized for this kind of graph. It is close to $0.724$ in this case. We leave it as future work to analyze it theoretically.

\end{document}